   \numberwithin{equation}{section}
\newtheorem{thm}{Theorem}[section]
\newtheorem{lem}[thm]{Lemma}
\newtheorem{defn}[thm]{Definition}
\begin{document}
\begin{frontmatter}
\author[rvt1]{Jian Wang}
\ead{wangj@tute.edu.cn}
\author[rvt2]{Yong Wang\corref{cor2}}
\ead{wangy581@nenu.edu.cn}

\cortext[cor2]{Corresponding author.}
\address[rvt1]{School of Science, Tianjin University of Technology and Education, Tianjin, 300222, P.R.China}
\address[rvt2]{School of Mathematics and Statistics, Northeast Normal University,
Changchun, 130024, P.R.China}

\title{Spectral Torsion for Nonminimal de-Rham Hodge Operator}
\begin{abstract}
In this paper, we investigate  some new spectral  torsion  which is the extension of spectral  torsion  for Dirac operators, and compute the spectral torsion  associated with nonminimal de Rham-Hodge operators on manifolds with (or without)  boundary.
The novelty of this paper lies in the calculation related to the noncommutative residue for the inverse of the Laplacion with  non-scalar principle symbols.
\end{abstract}
\begin{keyword}
Nonminimal de-Rham Hodge operator; Spectral  torsion; Noncommutative residue.
\MSC[2000] 53G20, 53A30, 46L87
\end{keyword}
\end{frontmatter}
\section{Introduction}
\label{1}
The noncommutative residue found in \cite{Gu,Wo1} plays a prominent role in noncommutative geometry.
For one dimensional manifolds, the noncommutative residue was discovered by Adler \cite{MA}
 in connection with geometric aspects of nonlinear partial differential equations.
Various extensions of the definition of Wodzicki residue have been considered
by many authors. In noncompact environments, Nicola\cite{FN} considered
traces on an algebra of pseudodifferential operators in $\mathbb{R}^{n}$. Battisti
and Coriasco \cite{BC1,BC2} extended the concept of Wodzicki residue to operator
algebras on the class of the manifolds with a finite number of cylindrical
ends.
In \cite{Co1}, Connes used the noncommutative residue to derive a conformal four dimensional Polyakov action analogy. Moreover, in \cite{Co2},
Connes made a challenging observation that the noncommutative residue of the square of the inverse of the Dirac operator was proportional to
the Einstein-Hilbert action, then Kastler\cite{Ka}, Kalau and Walze\cite{KW} gave a brute-force proof of this theorem, which we call the Kastler-Kalau-Walze theorem.
In \cite{FGLS}, Fedosov et al. defined a noncommutative residue on Boutet de Monvel's algebra and proved that it was a
unique continuous trace. In \cite{S}, Schrohe gave the relation between the Dixmier trace and the noncommutative residue for
manifolds with boundary.  For an oriented spin manifold  $M$ with boundary $\partial M$,  by the composition formula in Boutet de Monvel's algebra and the definition of $ \widetilde{{\rm Wres}}$ \cite{Wa1},  $\widetilde{{\rm Wres}}[(\pi^+D^{-1})^2]$ should be the sum of two terms from
interior and boundary of $M$, where $\pi^+D^{-1}$ is an element in Boutet de Monvel's algebra  \cite{Wa1}.
 For lower-dimension spin manifolds with boundary and the associated Dirac operators, Wang computed the lower dimensional volume and got
 a Kastler-Kalau-Walze type theorem in \cite{Wa4,Wa3,Wa2}. In\cite{GBF}, Gilkey, Branson and Fulling obtained a formula about
  heat kernel expansion coefficients of nonminimal operators. In \cite{WW}, we considered  the non-commutative residue of nonminimal operators
   and got the Kastler-Kalau-Walze type theorems for  nonminimal operators. In \cite{WW1}, we prove various Kastler-Kalau-Walze type theorems associated with  nonminimal de Rham-Hodge operators on compact manifolds with  boundary.

 In \cite{DL}, Dabrowski etc. obtained the Einstein
tensor (or, more precisely, its contravariant version)  using the Clifford representation of one-forms as $0$-order differential operators,
and they demonstrated that the noncommutative residue density recovered
the tensors $g$ and $G:=Ric-\frac{1}{2}R(g)g$ as certain bilinear functionals of vector fields on a manifold $M$, while their
dual tensors are recovered as a density of bilinear functionals of differential one-forms on $M$.
For Riemannian manifold $M$ of even dimension $n = 2m$ equipped
with a metric tensor $g$ and the (scalar) Laplacian $\Delta$, a localised functional in $C^{\infty}(M)$ can be defined by
  \begin{equation}
  {\rm  Wres}(f\Delta^{-m+1})=\frac{n-2}{12}\upsilon_{n-1}\int_{M}fR(g){\rm vol}_{g},
\end{equation}
where $f\in C^{\infty}(M)$, $ R = R(g)$ is the scalar curvature, that is the $g$-trace $R=g^{jk}R_{jk }$ of
the Ricci tensor with components $R_{jk}$  in local coordinates,  $g^{jk}$ are the raised components
of the metric $g$.
Let $u$, $v$,$w$ with the components with
respect to local coordinates $u_{a}$,$v_{a}$ and $w_{a}$, respectively, be  differential forms represented in
such a way as endomorphisms (matrices) $c(u) $, $c(v) $ and $c(w) $ on the spinor bundle.
The torsion functional, as defined in \cite{DL2}, assigns to a triple of one-forms $(u,v,w)$:
\begin{equation}
	\mathscr{T}_D(u,v,w)= \mathrm{Wres}(c(u)c(v)c(w)D D^{-2m}), \qquad u,v,w\in \Omega^1_D.
\end{equation}
Finally, the scalar curvature functional is defined by $\mathscr{R}_D(f)=Wres(fD^{-2m+2})$ for $f\in \mathcal{A}$.
In \cite{WW2},  we give some new spectral functionals which is the extension of spectral functionals to the noncommutative realm with torsion,
 and we relate them to  the noncommutative residue for manifolds with boundary.
For a finitely summable regular spectral triple, we recover two forms, torsion of the linear connection and four forms by the noncommutative residue and perturbed de-Rham Hodge operators, and provide an explicit computation of generalized spectral forms associated with the perturbed de-Rham Hodge Dirac triple  in \cite{WW3}.

 Motivated by twisted spectral triple,  the spectral torsion for Dirac operators  and the non-commutative residue of nonminimal operators,
the purpose of this paper is to generalize the results of spectral  torsion  for Dirac operators and get
the spectral torsion  associated with nonminimal de-Rham Hodge operators.
In our spectral torsion of the  nonminimal de-Rham Hodge operators, the  nonminimal Laplacion naturally appears. Usually,
the Laplacion  related to the computations of the noncommutative residue has the  scalar principle symbol.
In this paper, our computations are related to the Laplacion with  non-scalar principle symbols, which makes our calculation more complex.
  This paper is organized as follows: In Section 2, we recall  preliminaries on the spectral torsion  for  nonminimal de Rham-Hodge operator.
   In Section 3, for four dimensional compact manifolds without boundary and  the associated nonminimal de-Rham Hodge operators,
we compute the spectral torsion on compact manifolds.
 In Section 4, we compute the spectral torsion  associated with nonminimal de Rham-Hodge operators $\widetilde{D}=a_0d+b_0\delta$ on four-dimensional Riemannian manifolds with boundary, where $a_0, b_0$ are constants.

\section{Preliminaries on the spectral torsion  for  nonminimal de Rham-Hodge operators}

\subsection{Nonminimal de-Rham Hodge operators}
Firstly, we introduce some notations about Clifford action and nonminimal de Rham-Hodge operators. Let $M$ be an $n$-dimensional ($n\geq 3$) oriented compact Riemannian manifold with a Riemannian metric $g^{TM}$.

Let $\nabla$ denote the Levi-civita connection about $g^M$.
 In the local coordinates $\{x_i; 1\leq i\leq n\}$ and the fixed orthonormal frame $\{\widetilde{e_1},\cdots,\widetilde{e_n}\}$,
 the connection matrix $(\omega_{s,t})$ is defined by
  \begin{equation*}
\nabla(\widetilde{e_1},\cdots,\widetilde{e_n})= (\widetilde{e_1},\cdots,\widetilde{e_n})(\omega_{s,t}).
\end{equation*}
Let $\widehat{c}(e_j)=\epsilon (e_j^* )+\iota
(e_j);~~
c(e_j)=\epsilon (e_j^* )-\iota (e_j),$
 where $e_j^*=g^{TM}(e_j,\cdot)$ and $\epsilon (e_j^*)$,~$\iota (e_j)$ be the exterior and interior multiplications respectively,
which satisfies
\begin{align}
\label{ali1}
&\widehat{c}(e_i)\widehat{c}(e_j)+\widehat{c}(e_j)\widehat{c}(e_i)=2g^{TM}(e_i,e_j);~~\nonumber\\
&c(e_i)c(e_j)+c(e_j)c(e_i)=-2g^{TM}(e_i,e_j);~~\nonumber\\
&c(e_i)\widehat{c}(e_j)+\widehat{c}(e_j)c(e_i)=0.
\end{align}
Next, we recall the nonminimal de Rham-Hodge operator. Following \cite{WW1}, the de Rham derivative $\text{d}$ is a differential operator on $C^{\infty}(M,\wedge^*T^*M)$ and the de Rham coderivative $\delta=\text{d}^*$. Denote
by  $\text{d}+\delta:~\wedge^*(T^*M)\rightarrow \wedge^*(T^*M)$  the signature operator.
By \cite{WW1}, we have
  \begin{equation*}
\text{d}+\delta=\sum^n_{i=1}c(e_i)\Big[e_i+\frac{1}{4}\sum_{s,t}\omega_{s,t}
(e_i)\big[\hat{c}(e_s)\hat{c}(e_t)-c(e_s)c(e_t)\big]\Big].
\end{equation*}
Let $\widetilde{c}(e_j)=a_0\epsilon(e_j^*)-b_0\iota(e_j)$, where $a_0, b_0$ are constants and $a_0b_0\neq 0$,  we define the nonminimal de Rham-Hodge operators  as
\begin{align}
\widetilde{D}=&a_0d+b_0\delta+\sqrt{-1}c(X)
=\sum^n_{i=1}\widetilde{c}(e_i)\bigg[e_i+\frac{1}{4}\sum_{s,t}\omega_{s,t}
(e_i)[\widehat{c}(e_s)\widehat{c}(e_t)
-c(e_s)c(e_t)]\bigg]+\sqrt{-1}c(X),
\end{align}
and $\widetilde{D}^{*}=b_0\text{d}+a_0\delta+\sqrt{-1}c(X)$ is the adjoint operator of $\widetilde{D}$.

\subsection{Noncommutative residue for manifold with boundary}
 In this section we consider an $n$-dimensional oriented Riemannian manifold $(M, g^{M})$ equipped
with some spin structure. Let $M$ be an $n$-dimensional compact oriented manifold with boundary $\partial M$.
 We assume that the metric $g^{M}$ on $M$ has
the following form near the boundary
 \begin{equation}
 g^{M}=\frac{1}{h(x_{n})}g^{\partial M}+\text {d}x _{n}^{2} ,
\end{equation}
where $g^{\partial M}$ is the metric on $\partial M$ and  $h(x_n)\in
C^{\infty}([0,1))=\{\widetilde{h}|_{[0,1)}|\widetilde{h}\in
C^{\infty}((-\varepsilon,1))\}$ for some $\varepsilon>0$ and
satisfies $h(x_n)>0,~h(0)=1$ where $x_n$ denotes the normal
directional coordinate.
 Let $U\subset
M$ be a collar neighborhood of $\partial M$ which is diffeomorphic $\partial M\times [0,1)$. By the definition of $h(x_n)\in C^{\infty}([0,1))$
and $h(x_n)>0$, there exists $\tilde{h}\in C^{\infty}((-\varepsilon,1))$ such that $\tilde{h}|_{[0,1)}=h$ and $\tilde{h}>0$ for some
sufficiently small $\varepsilon>0$. Then there exists a metric $\hat{g}$ on $\hat{M}=M\bigcup_{\partial M}\big(\partial M\times
(-\varepsilon,0]\big)$ which has the form on $U\bigcup_{\partial M}\big(\partial M\times (-\varepsilon,0 ]\big)$
 \begin{equation*}
\hat{g}=\frac{1}{\tilde{h}(x_{n})}g^{\partial M}+\text {d}x _{n}^{2} ,
\end{equation*}
such that $\hat{g}|_{M}=g$. We fix a metric $\hat{g}$ on the $\hat{M}$ such that $\hat{g}|_{M}=g$.

To define the lower dimensional volume, some basic facts and formulae about Boutet de Monvel's calculus which can be found  in Sec.2
in \cite{Wa1} are needed. Let $F:L^2(\mathbb{R}_t)\rightarrow L^2(\mathbb{R}_v)$
  denote the Fourier transformation and
$\Phi(\overline{\mathbb{R}^+}) =r^+\Phi(\mathbb{R})$ (similarly define $\Phi(\overline{\mathbb{R}^-}$)), where $\Phi(\mathbb{R})$
denotes the Schwartz space and
  \begin{equation*}
r^{+}:C^\infty (\mathbb{R})\rightarrow C^\infty (\overline{\mathbb{R}^+});~ f\rightarrow f|\overline{\mathbb{R}^+};~
 \overline{\mathbb{R}^+}=\{x\geq0;x\in \mathbb{R}\}.
\end{equation*}
Fourier transformation, which, for $ u\in L^{2}(\mathbb{R}_{t})\cap L^{1}(\mathbb{R}_{t})$, is given by
\begin{equation*}
F(u)( \xi )=\int e^{-i\xi t}u(t)\text{d}t.
  \end{equation*}
We define $H^+=F(\Phi(\overline{\mathbb{R}^+}));~ H^-_0=F(\Phi(\overline{\mathbb{R}^-}))$ which are orthogonal to each other. We have the following
 property: $h\in H^+~(H^-_0)$ iff $h\in C^\infty(\mathbb{R})$ and it admits an analytic extension to the lower (upper resp.) complex
half-plane $\{{\rm Im}\xi<0\}~(\{{\rm Im}\xi>0\}$ resp.) such that for all nonnegative integer $l$,
 \begin{equation*}
\frac{\text{d}^{l}h}{ \text{d}\xi^l}(\xi)\sim\sum^{\infty}_{k=1}\frac{\text{d}^l}{\text{d}\xi^l}(\frac{c_k}{\xi^k})
\end{equation*}
as $|\xi|\rightarrow +\infty,{\rm Im}\xi\leq0~({\rm Im}\xi\geq0)$.

 Let $H'$ be the space of all polynomials and $H^-=H^-_0\bigoplus H';~H=H^+\bigoplus H^-.$ Denote by $\pi^+~(\pi^-)$ respectively the
 projection on $H^+~(H^-)$. For calculations, we take $H=\widetilde H=\{$rational functions having no poles on the real axis$\}$ ($\widetilde{H}$
 is a dense set in the topology of $H$). Then on $\tilde{H}$,
 \begin{equation*}
\pi^+h(\xi_0)=\frac{1}{2\pi i}\lim_{u\rightarrow 0^{-}}\int_{\Gamma^+}\frac{h(\xi)}{\xi_0+iu-\xi}\text {d}\xi,
\end{equation*}
where $\Gamma^+$ is a Jordan close curve included ${\rm Im}\xi>0$ surrounding all the singularities of $h$ in the upper half-plane and
$\xi_0\in \mathbb{R}$. Similarly, define $\pi'$ on $\widetilde{H}$,
 \begin{equation*}
\pi'h=\frac{1}{2\pi}\int_{\Gamma^+}h(\xi)\text {d}\xi.
\end{equation*}
So, $\pi'(H^-)=0$. For $h\in H\bigcap L^1(R)$, $\pi'h=\frac{1}{2\pi}\int_{R}h(v)dv$ and for $h\in H^+\bigcap L^1(R)$, $\pi'h=0$.

Denoting by $\mathcal{B}$ the Boutet de Monvel's algebra, we recall the main theorem in \cite{FGLS}.
\begin{thm}\label{th:32}\cite{FGLS}
Let $X$ be a manifold with boundary $\partial X$ and
$E(F)$ are vector bundles over $X~(\partial X$ resp.), ${\rm dim}X=n\geq3$,
 $A=\left(\begin{array}{lcr}\pi^+P+G &   K \\
T &  S    \end{array}\right)$ $\in \mathcal{B}$ , and denote by $p$, $b$ and $s$ the local symbols of $P,G$ and $S$ respectively.
 Let ${\rm{tr}}_E ({\rm{tr}}_F$ ) be a trace on $E(F$ resp.), define:
 \begin{eqnarray}
{\rm{\widetilde{Wres}}}(A)&=&\int_X\int_{\bf S}{\rm{Tr}}_E\left[p_{-n}(x,\xi)\right]\sigma(\xi)\text {d}x \nonumber\\
&&+2\pi\int_ {\partial X}\int_{\bf S'}\left\{{\rm Tr}_E\left[(b_{-n})(x',\xi')\right]+{\rm{tr}}
_F\left[s_{1-n}(x',\xi')\right]\right\}\sigma(\xi')\text{d}x',
\end{eqnarray}
where ${\bf S}~({\bf S}')$ be the unit sphere about
$\xi~(\xi' $ resp.) and $\sigma(\xi)~(\sigma(\xi'))$ be the corresponding
canonical $n-1~(n-2$ resp.) volume form.
Then~~ a) ${\rm \widetilde{Wres}}([A,B])=0 $, for any
$A,B\in\mathcal{B}$;~~ b) It is a unique continuous trace on
$\mathcal{B}/\mathcal{B}^{-\infty}$.
\end{thm}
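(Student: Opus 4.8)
The plan is to establish the two assertions separately: (a) that $\widetilde{{\rm Wres}}$ vanishes on commutators, hence is a trace, and (b) that every continuous trace on $\mathcal{B}/\mathcal{B}^{-\infty}$ is a scalar multiple of it. Before either I would make two reductions. First, $\widetilde{{\rm Wres}}$ visibly annihilates $\mathcal{B}^{-\infty}$ (a smoothing operator has $p_{-n}=b_{-n}=s_{1-n}=0$), so it descends to the quotient — and by hypothesis so does any trace $\tau$ to be compared with it. Second, a standard localization — fix a finite atlas with subordinate partition of unity $\{\varphi_j\}$, use linearity of $\widetilde{{\rm Wres}}$ together with pseudolocality, and note that the off-diagonal pieces $\varphi_j A\varphi_k$ with disjoint supports are smoothing — reduces both (a) and (b) to operators supported in one coordinate patch, either interior (modelled on $\R^{n}$) or boundary (modelled on $\overline{\R^{n}_+}$). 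I would then verify that the three densities in the definition of $\widetilde{{\rm Wres}}$ are coordinate-independent; this is the Boutet de Monvel analogue of Wodzicki's invariance lemma, obtained by pushing the leading Jacobian of a coordinate change through the symbol transformation laws for $P$, $G$ and $S$.

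For (a) I would compute the degree-$(-n)$ interior symbol and the degree-$(-n)$ and degree-$(1-n)$ boundary symbols of $AB$ and of $BA$, and show that the differences integrate to zero over $\mathbf{S}$ and $\mathbf{S}'$. On the interior this is the classical mechanism: the $\psi$DO part of the $(1,1)$-block of $[A,B]$ is the ordinary commutator $[P_A,P_B]$ of the pseudodifferential parts, and, modulo terms of lower homogeneity, the degree-$(-n)$ part of its symbol is a finite sum $\sum_{j}\partial_{\xi_j}r_j$ with each $r_j$ homogeneous of degree $1-n$, so its integral over $\mathbf{S}$ vanishes by Stokes on the cosphere. On the boundary — the genuinely new ingredient — everything else in $[A,B]$ feeds into $b_{-n}$ and $s_{1-n}$: the singular Green leftover of $\pi^{+}P_A\pi^{+}P_B$, the brackets involving $G$, and $K_AT_B-K_BT_A$ go into $b$, while $[S_A,S_B]+T_AK_B-T_BK_A$ goes into $s$. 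Using the composition formula of Boutet de Monvel's calculus — in which the product of a pseudodifferential part with a singular Green part is governed by the projection $\pi^{+}$ in the normal cotangent variable and by the plus-integral $\pi'$ — one rewrites $b_{-n}(AB)-b_{-n}(BA)$ and $s_{1-n}(AB)-s_{1-n}(BA)$ as combinations of $\xi'$-divergences on $\mathbf{S}'$ and of terms annihilated by $\pi'$ (because they lie in $H^{-}$, or in $H^{+}\cap L^{1}$ on which $\pi'$ vanishes, by the $H^{\pm}$ properties recalled above). The main obstacle of the whole proof is, I expect, precisely this boundary bookkeeping: controlling the $H^{+}/H^{-}$ splitting in $\xi_n$ through the product, keeping track of all the operator-valued-symbol terms, and recovering exactly the factor $2\pi$ together with the homogeneity shift between $b$ and $s$.

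For (b), let $\tau$ be a continuous trace on $\mathcal{B}/\mathcal{B}^{-\infty}$. By (a) it annihilates the closure of the span of commutators, and by the localization it is determined on single-patch operators. The key structural claim I would prove is that, modulo that closure, an element of $\mathcal{B}/\mathcal{B}^{-\infty}$ is pinned down by the finitely many residue numbers $\int p_{-n}$, $\int b_{-n}$, $\int s_{1-n}$. For the interior part this is the argument behind uniqueness of Wodzicki's residue: every classical symbol of order $<-n$ is, modulo $\xi$-derivatives, the symbol of a commutator, so $\tau$ kills all such operators and what survives is one scalar multiple of $\int p_{-n}$. Running the same reduction inside the $H^{\pm}$-calculus disposes of the two boundary blocks, whence $\tau$ is a priori of the form $c_{1}(\text{interior residue})+c_{2}(\text{boundary residue})$. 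A normalization step then forces $c_{1}=c_{2}$: I would evaluate $\tau$ on a test operator whose interior residue vanishes but whose boundary residue does not (for example one built from a Poisson--trace composition) and on one with the opposite profile, and conclude that the two coefficients coincide; hence $\tau=c\,\widetilde{{\rm Wres}}$. The subtle ingredient in (b) is the quantitative density statement — that the commutator subspace has exactly the expected finite codimension — which must be proved by exhibiting the required operators explicitly as finite sums of commutators in $\mathcal{B}$.
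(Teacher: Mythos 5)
The paper does not prove this statement: Theorem~\ref{th:32} is quoted verbatim from Fedosov--Golse--Leichtnam--Schrohe \cite{FGLS} and used as a black box, so there is no in-text argument to compare your sketch against. Judged on its own merits, your outline captures the right general shape but contains two steps that would actually fail as written.

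For the interior part of~(a) you assert that, under trace, the degree-$(-n)$ symbol of $[P_A,P_B]$ is a pure $\xi$-divergence $\sum_j\partial_{\xi_j}r_j$ and hence has vanishing cosphere integral pointwise. This is not what the Moyal expansion yields. The $\alpha=0$ term drops by cyclicity of the fibrewise trace, but the $|\alpha|=1$ term is $-i\operatorname{Tr}\{p_A,p_B\}$, which splits as
\begin{equation*}
-i\sum_j\partial_{\xi_j}\bigl(\operatorname{Tr}(p_A\,\partial_{x_j}p_B)\bigr)\;+\;i\sum_j\partial_{x_j}\bigl(\operatorname{Tr}(p_A\,\partial_{\xi_j}p_B)\bigr),
\end{equation*}
and similarly at higher $\alpha$. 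Only the $\xi$-divergence vanishes on $\mathbf{S}$; the second sum is a total $x$-derivative, and on a manifold \emph{with} boundary its integral over $X$ is a boundary integral, not zero. A concrete check: for $P_A=D_{x_n}$ one finds $\int_{\mathbf S}\operatorname{Tr}\,\sigma_{-n}\bigl([D_{x_n},P_B]\bigr)\,\sigma(\xi)=-i\,\partial_{x_n}\operatorname{res}_x(P_B)$, whose $X$-integral is $-i\int_{\partial X}\operatorname{res}_{x'}(P_B)\,\mathrm{d}x'$. The actual content of the FGLS trace property is that this residual boundary integral coming from the interior block is exactly cancelled by the boundary residue terms produced by the singular Green leftover of $\pi^+P_A\pi^+P_B-\pi^+(P_AP_B)$ and by the other blocks; your sketch hides this cancellation by (incorrectly) claiming the interior contribution vanishes on its own.

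The normalization step in~(b) is also not conclusive. Evaluating $\tau$ on one operator with vanishing interior residue and nonzero boundary residue, and on another with the opposite profile, only determines $c_1$ and $c_2$ individually; it imposes no relation between them, so it cannot ``force the coefficients to coincide.'' What actually fixes the ratio $c_2/c_1$ (and the seemingly arbitrary $2\pi$) is the existence of a single commutator $[A,B]\in\mathcal{B}$ whose interior and boundary residues are both individually nonzero --- exactly the phenomenon missed in the previous paragraph. For such an element, $\tau([A,B])=0$ gives $c_1a+c_2b=0$ with $a,b\neq 0$, while $\widetilde{\rm Wres}([A,B])=0$ from part~(a) gives $a+2\pi b=0$, so $c_2=2\pi c_1$ and $\tau=c_1\widetilde{\rm Wres}$. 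Without exhibiting such a mixing commutator, the argument only places $\tau$ in a two-parameter family of candidates, which is strictly weaker than the uniqueness asserted in the theorem.
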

Let $p_{1},p_{2}$ be nonnegative integers and $p_{1}+p_{2}\leq n$. Then by Sec 2.1 of \cite{Wa3},  we have
\begin{defn} Lower-dimensional volumes of Riemannian manifolds with boundary  are defined by
   \begin{equation}\label{}
  {\rm Vol}^{(p_1,p_2)}_nM:=\widetilde{{\rm Wres}}[\pi^+\widetilde{D}^{-p_1}\circ\pi^+(\widetilde{D}^{*})^{-p_2}].
\end{equation}
\end{defn}

Let $\sigma_{l}(A)$ be the $l$-order symbols of an operator $A$ and
 $\sigma^+_{r}=\sigma^+_{r}(\xi)=\pi^+\sigma_{r}(x',0,\xi',\xi_{n})$.
  For $n$ dimensional Riemannian manifolds with boundary,
 an application of (2.1.4) in \cite{Wa1} shows that
\begin{equation}
\widetilde{{\rm Wres}}[\pi^+\widetilde{D}^{-p_1}\circ\pi^+(\widetilde{D}^{*})^{-p_2}]=\int_M\int_{|\xi|=1}{\rm
Tr}_{\Lambda^{*}(T^{*}M)}[\sigma_{-n}(\widetilde{D}^{-p_1}  (\widetilde{D}^{*})^{-p_2})]\sigma(\xi)\text{d}x+\int_{\partial
M}\Phi,
\end{equation}
where
 \begin{eqnarray}
\Phi&=&\int_{|\xi'|=1}\int^{+\infty}_{-\infty}\sum^{\infty}_{j, k=0}
\sum\frac{(-i)^{|\alpha|+j+k+1}}{\alpha!(j+k+1)!}
 {\rm Tr}_{\Lambda^{*}(T^{*}M)}
\Big[\partial^j_{x_n}\partial^\alpha_{\xi'}\partial^k_{\xi_n}
\sigma^+_{r}(\widetilde{D}^{-p_1})(x',0,\xi',\xi_n)\nonumber\\
&&\times\partial^\alpha_{x'}\partial^{j+1}_{\xi_n}\partial^k_{x_n}\sigma_{l}
((\widetilde{D}^{*})^{-p_2})(x',0,\xi',\xi_n)\Big]d\xi_n\sigma(\xi')\text {d}x',
\end{eqnarray}
and the sum is taken over $r-k+|\alpha|+\ell-j-1=-n,r\leq-p_{1},\ell\leq-p_{2}$.

 \section{The spectral torsion for  nonminimal de-Rham Hodge operators}

 \subsection{Torsion functional for  nonminimal de Rham-Hodge operators}

Let $S^{*}M\subset T^{*}M$ denotes the co-sphere bundle on $M$
 and a pseudo-differential operator $P\in \Psi DO(E)$,
denote by $\sigma_{-n}^{P}$ the component of order
$-n$ of the complete symbol $\sigma^{P}= \sum_{i}\sigma_{i}^{P}$ of $P$
such that the equality
 \begin{equation}
{\rm  Wres}(P)= \int_{S^{*}M}\text{Tr}(\sigma_{-n}^{P}(x,\xi)){\rm  d}x {\rm  d}\xi.
\end{equation}
In \cite{Co1,Co2,Ka,KW}, it was shown that the noncommutative  residue ${\rm  Wres}(\Delta^{-n/2+1})$ of a generalized laplacian $\Delta$  on a complex vector bundle $E$ over a closed compact manifold $M$,
is the integral of the second coefficient of the heat kernel expansion of $\Delta$ up to a proportional factor.
In \cite{Co1}, the well-known Connes' trace theorem states the Dixmier trace of $-n$ order
pseudo-differential operator equals to its  noncommutative  residue up to a constant on
a closed $n-$dimensional manifold.
 Denote by $\Delta$ the Laplacian as above and $Tr_{\omega}$ the Dixmier trace,
then
 \begin{equation}
Tr_{\omega}((1+\Delta)^{-n/2})=\frac{1}{n}{\rm  Wres}((1+\Delta)^{-n/2})=\frac{1}{n}{\rm  dim}(E){\rm  Vol}(S^{n-1}){\rm  Vol}_{M}.
\end{equation}
\begin{defn}\cite{DL}
Let $c(u)=\sum_{r=1}^{n} u_{r}c(e_r), c(v)=\sum_{p=1}^{n} v_{p}c(e_{p}), c(w)=\sum_{q=1}^{n} w_{q}c(e_{q}),$ the trilinear Clifford multiplication by functional of differential one-forms $c(u), c(v)$, $ c(w)$
\begin{align}
\mathscr{T}_D(u,v,w)= \mathrm{Wres}\big(c(u)c(v)c(w)D D^{-2m}\big)
 \end{align}
  is called torsion functional for Dirac operator.
  \end{defn}
 Now for  the nonminimal de Rham-Hodge operators $\widetilde{D}=a_0\text{d}+b_0\delta+\sqrt{-1}c(X)$, we have
  \begin{defn}
By the trilinear Clifford multiplication by functional of differential one-forms $\widetilde{c}(u)$, $\widetilde{c}(v)$,  $ \widetilde{c}(w)$, the spectral torsion $\mathscr{T}_{\widetilde{D}}$ for $\widetilde{D}=a_0\text{d}+b_0\delta+\sqrt{-1}c(X),  \widetilde{D}^{*}=b_0\text{d}+a_0\delta+\sqrt{-1}c(X)$ defined by
\begin{align}
 	\mathscr{T}_{\widetilde{D}}\big(\widetilde{c}(u),\widetilde{c}(v),\widetilde{c}(w)\big)
 &=\mathrm{Wres}\big(\widetilde{c}(u) \widetilde{c}(v) \widetilde{c}(w) \widetilde{D}(\widetilde{D}^{*}\widetilde{D})^{-m}\big) \nonumber\\
  &=\int_{M}\int_{\|\xi\|=1} \operatorname{Tr}\left[\sigma_{-2 m}
\Big( \widetilde{c}(u) \widetilde{c}(v) \widetilde{c}(w)\widetilde{D}(\widetilde{D}^{*}\widetilde{D})^{-m}\Big)
\right](x, \xi)\sigma(\xi)\text{d}x.
 \end{align}
  \end{defn}
When $a_0= b_0=1$, we get the spectral torsion in  \cite{DL}.
 \subsection{The spectral torsion for  nonminimal de-Rham Hodge operators}
This section is designed to get the spectral torsion associated with nonminimal de Rham-Hodge operators on four-dimensional compact manifolds.
To simplify the complex calculations, we present the results of a four-dimensional manifold. By Definition 3.2,
the spectral torsion $\mathscr{T}_{\widetilde{D}}$ for $\widetilde{D}=a_0\text{d}+b_0\delta+\sqrt{-1}c(X),  \widetilde{D}^{*}=b_0\text{d}+a_0\delta+\sqrt{-1}c(X)$ on four-dimensional compact manifolds read
\begin{align}
 \mathscr{T}_{\widetilde{D}}\big(\widetilde{c}(u),\widetilde{c}(v),\widetilde{c}(w)\big)
 &=\mathrm{Wres}\big(\widetilde{c}(u) \widetilde{c}(v) \widetilde{c}(w) \widetilde{D}(\widetilde{D}^{*}\widetilde{D})^{-2}\big) \nonumber\\
  &=\int_{M}\int_{\|\xi\|=1} \operatorname{Tr}\left[\sigma_{-4}
\Big( \widetilde{c}(u) \widetilde{c}(v) \widetilde{c}(w)\widetilde{D}(\widetilde{D}^{*}\widetilde{D})^{-2}\Big)
\right](x, \xi)\sigma(\xi)\text{d}x.
 \end{align}

$\mathbf{Step~~I}$: primary symbol representation.

From (3.8) in \cite{WW6}, we get
\begin{align}\label{ABD}
& \sigma_{-4}
\Big( \widetilde{c}(u) \widetilde{c}(v) \widetilde{c}(w)\widetilde{D}(\widetilde{D}^{*}\widetilde{D})^{-2}\Big) \nonumber\\
=&\widetilde{c}(u) \widetilde{c}(v) \widetilde{c}(w)\bigg\{\sum_{|\alpha|=0}^{\infty} \frac{(-i)^{|\alpha|}}{\alpha!} \partial_{\xi}^{\alpha}(\sigma (\widetilde{D})) \partial_{x}^{\alpha}\big((\widetilde{D}^{*}\widetilde{D})^{-2}\big)\bigg\}_{-4} \nonumber\\
=&\widetilde{c}(u) \widetilde{c}(v) \widetilde{c}(w)\Big(
 \sigma_{0}(a_0\text{d}+b_0\delta+\sqrt{-1}c(X)) \sigma_{-4}((\widetilde{D}^{*}\widetilde{D})^{-2})+
  \sigma_{1}(a_0\text{d}+b_0\delta+\sqrt{-1}c(X)) \sigma_{-5}((\widetilde{D}^{*}\widetilde{D})^{-2})\nonumber\\
&-\sqrt{-1}\sum_{j=1}^{4}  \partial_{\xi_{j}} (\sigma_{1}(\widetilde{D})) \partial_{x_{j}}(\sigma_{-4}((\widetilde{D}^{*}\widetilde{D})^{-2}))\Big).
\end{align}

From $\widetilde{D}=a_0\text{d}+b_0\delta+\sqrt{-1}c(X)$, $\widetilde{D}^{*}=b_0\text{d}+a_0\delta+\sqrt{-1}c(X)$ and
$\widetilde{\Delta}= (b_0\text{d}+a_0\delta)(a_0\text{d}+b_0\delta)   =b_{0}^{2}d\delta+a_{0}^{2}\delta d $,
we obtain
\begin{align}
\widetilde{D}^{*}\widetilde{D}=\widetilde{\Delta}+(b_{0} d +a_{0} \delta   )\sqrt{-1}c(X)+\sqrt{-1}c(X)(a_0\text{d}+b_0\delta)+(\sqrt{-1}c(X))^{2}.
\end{align}
 Firstly, we compute the symbol expansion of $\widetilde{\Delta}=b_{0}^{2}d\delta+a_{0}^{2}\delta d$. The Lemma follows by direct computation.
\begin{lem}\label{le:32}
Let $\widetilde{\Delta}=b_{0}^{2}d\delta+a_{0}^{2}\delta d$ on $C^{\infty}(\Lambda^{k})$, then
\begin{align}
\sigma_{-2}(\widetilde{\Delta}^{-1})&=\frac{b_{0}^{2}|\xi|^{2}+(a_{0}^{2}-b_{0}^{2})\varepsilon(\xi)\iota(\xi)}{a_{0}^{2}b_{0}^{2}|\xi|^{4}};\\
\sigma_{-4}(\widetilde{\Delta}^{-2})&=\frac{b_{0}^{4}|\xi|^{2}+(a_{0}^{4}-b_{0}^{4})\varepsilon(\xi)\iota(\xi)}{a_{0}^{4}b_{0}^{4}|\xi|^{6}}.
\end{align}
\end{lem}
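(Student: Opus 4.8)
The plan is to compute the principal symbol of $\widetilde{\Delta}=b_0^2\,d\delta+a_0^2\,\delta d$ acting on $C^\infty(\Lambda^k)$, invert it fibrewise, and then square. First I would recall that, in local coordinates with symbol variable $\xi$, the de Rham differential has principal symbol $\sqrt{-1}\,\varepsilon(\xi)$ and the coderivative $\delta$ has principal symbol $-\sqrt{-1}\,\iota(\xi)$, where $\varepsilon(\xi)=\varepsilon(\xi^\flat)$ denotes exterior multiplication by the covector $\xi$ and $\iota(\xi)$ the interior product (contraction) with the dual vector. Hence $\sigma_2(d\delta)=\varepsilon(\xi)\iota(\xi)$ and $\sigma_2(\delta d)=\iota(\xi)\varepsilon(\xi)$, so the principal symbol of $\widetilde{\Delta}$ is the endomorphism
\[
p_2(\xi)=b_0^2\,\varepsilon(\xi)\iota(\xi)+a_0^2\,\iota(\xi)\varepsilon(\xi)
\]
of $\Lambda^*T^*_xM$.

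The key algebraic fact I would use is the Clifford-type identity $\varepsilon(\xi)\iota(\xi)+\iota(\xi)\varepsilon(\xi)=|\xi|^2\,\mathrm{Id}$, together with the fact that $P:=|\xi|^{-2}\varepsilon(\xi)\iota(\xi)$ and $Q:=|\xi|^{-2}\iota(\xi)\varepsilon(\xi)$ are complementary orthogonal projections ($P^2=P$, $Q^2=Q$, $PQ=QP=0$, $P+Q=\mathrm{Id}$) — this follows since $\varepsilon(\xi)^2=0=\iota(\xi)^2$. Writing $p_2(\xi)=|\xi|^2\big(b_0^2\,P+a_0^2\,Q\big)$, its fibrewise inverse is immediate:
\[
p_2(\xi)^{-1}=|\xi|^{-2}\Big(\tfrac{1}{b_0^2}\,P+\tfrac{1}{a_0^2}\,Q\Big)
=\frac{1}{|\xi|^{4}}\Big(\tfrac{1}{b_0^2}\,\varepsilon(\xi)\iota(\xi)+\tfrac{1}{a_0^2}\,\iota(\xi)\varepsilon(\xi)\Big).
\]
Using $\iota(\xi)\varepsilon(\xi)=|\xi|^2\mathrm{Id}-\varepsilon(\xi)\iota(\xi)$ to eliminate $Q$, this rearranges to $\dfrac{b_0^2|\xi|^2+(a_0^2-b_0^2)\varepsilon(\xi)\iota(\xi)}{a_0^2b_0^2|\xi|^4}$, which is exactly $\sigma_{-2}(\widetilde{\Delta}^{-1})$. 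Since $\widetilde{\Delta}$ is a generalized Laplacian (its principal symbol is invertible and $\sigma_{-2}(\widetilde{\Delta}^{-1})$ is the inverse of the principal symbol), $\sigma_{-4}(\widetilde{\Delta}^{-2})=\big(\sigma_{-2}(\widetilde{\Delta}^{-1})\big)^2$; squaring the projection form gives $|\xi|^{-4}\big(b_0^{-4}P+a_0^{-4}Q\big)$, and re-expressing in terms of $\varepsilon(\xi)\iota(\xi)$ produces $\dfrac{b_0^4|\xi|^2+(a_0^4-b_0^4)\varepsilon(\xi)\iota(\xi)}{a_0^4b_0^4|\xi|^6}$, the claimed formula for $\sigma_{-4}(\widetilde{\Delta}^{-2})$.

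I do not expect a serious obstacle here; the statement is labelled "by direct computation" and the only subtlety is bookkeeping. The one point requiring mild care is the passage from $\sigma_{-2}(\widetilde{\Delta}^{-1})$ to $\sigma_{-4}(\widetilde{\Delta}^{-2})$: in general $\sigma_{-4}(\widetilde{\Delta}^{-2})$ involves not just the square of the leading symbol of $\widetilde{\Delta}^{-1}$ but also a cross term $\sum_j \partial_{\xi_j}\sigma_{-2}(\widetilde{\Delta}^{-1})\,\partial_{x_j}\sigma_{-2}(\widetilde{\Delta}^{-1})$ and a contribution from $\sigma_{-3}(\widetilde{\Delta}^{-1})$; but all such terms are of lower order in $|\xi|$ (homogeneous of degree $-5$ rather than $-4$), so they do not enter the degree $-4$ component, and the naive squaring is legitimate. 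After checking that the two expressions are homogeneous of the stated degrees in $\xi$ and that the $k$-independence is genuine (the identities for $P,Q$ hold on each $\Lambda^k$), the lemma follows.
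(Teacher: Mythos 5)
Your proof is correct and follows essentially the same route as the paper: both rest on the identity $\varepsilon(\xi)\iota(\xi)+\iota(\xi)\varepsilon(\xi)=|\xi|^{2}$, the paper by directly verifying that $\bigl[(b_{0}^{2}-a_{0}^{2})\varepsilon(\xi)\iota(\xi)+a_{0}^{2}|\xi|^{2}\bigr]\bigl[b_{0}^{2}|\xi|^{2}+(a_{0}^{2}-b_{0}^{2})\varepsilon(\xi)\iota(\xi)\bigr]=a_{0}^{2}b_{0}^{2}|\xi|^{4}$, you by packaging the same algebra as a spectral decomposition into the complementary projections $P=|\xi|^{-2}\varepsilon(\xi)\iota(\xi)$ and $Q=|\xi|^{-2}\iota(\xi)\varepsilon(\xi)$, which makes both the inversion and the squaring immediate. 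You also make explicit a point the paper leaves implicit, namely that the cross term $\sum_{j}\partial_{\xi_{j}}\sigma_{-2}\,\partial_{x_{j}}\sigma_{-2}$ and the $\sigma_{-3}\sigma_{-2}$ contributions are homogeneous of degree $-5$ and hence do not enter $\sigma_{-4}(\widetilde{\Delta}^{-2})$, which is why $\sigma_{-4}(\widetilde{\Delta}^{-2})=\bigl(\sigma_{-2}(\widetilde{\Delta}^{-1})\bigr)^{2}$; this is consistent with the paper's own Lemma~3.4.
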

\begin{proof}
By using (7) in \cite{Ka}, we have
$\sigma_{2}(d\delta)=\varepsilon(\xi)\iota(\xi), \ \sigma_{2}(d\delta+\delta d)=|\xi|^{2}.$
Combining these results, we obtain
\begin{equation}
 \sigma_{2}(b_{0}^{2}d\delta+a_{0}^{2}\delta d)=(b_{0}^{2}-a_{0}^{2})\sigma_{2}(d\delta)+a_{0}^{2}\sigma_{2}(d\delta+\delta d)
=(b_{0}^{2}-a_{0}^{2})\varepsilon(\xi)\iota(\xi) +a_{0}^{2}|\xi|^{2}.
\end{equation}
Then an application of $|\xi|^{2}=\varepsilon(\xi)\iota(\xi)+\iota(\xi)\varepsilon(\xi)$, we obtain
\begin{equation}
[(b_{0}^{2}-a_{0}^{2})\varepsilon(\xi)\iota(\xi) +a_{0}^{2}|\xi|^{2}]
[b_{0}^{2}|\xi|^{2}+(a_{0}^{2}-b_{0}^{2})\varepsilon(\xi)\iota(\xi)]=a_{0}^{2}b_{0}^{2}|\xi|^{4}.
\end{equation}
Therefore
\begin{equation}
\sigma_{-2}(\widetilde{\Delta}^{-1})=\frac{b_{0}^{2}|\xi|^{2}+(a_{0}^{2}-b_{0}^{2})\varepsilon(\xi)\iota(\xi)}{a_{0}^{2}b_{0}^{2}|\xi|^{4}}.
\end{equation}
On the other hand, it is straightforward to see
\begin{equation}
\sigma_{-4}(\widetilde{\Delta}^{-2})=\Big(\sigma_{-2}(\widetilde{\Delta}^{-1})\Big)^{2}
=\frac{b_{0}^{4}|\xi|^{2}+(a_{0}^{4}-b_{0}^{4})\varepsilon(\xi)\iota(\xi)}{a_{0}^{4}b_{0}^{4}|\xi|^{6}}.
\end{equation}
\end{proof}
Write
\begin{equation}
\sigma(\widetilde{D})=p_1+p_0;
~\sigma(\widetilde{D}^{-1})=\sum^{\infty}_{j=1}q_{-j}.
\end{equation}
By the composition formula of psudodifferential operators,  we have
\begin{eqnarray*}
1=\sigma(\widetilde{D}\circ \widetilde{D}^{-1})&=&\sum_{\alpha}\frac{1}{\alpha!}\partial^{\alpha}_{\xi}[\sigma(\widetilde{D})]D^{\alpha}_{x}[\sigma(\widetilde{D}^{-1})]\\
&=&(p_1+p_0)(q_{-1}+q_{-2}+q_{-3}+\cdots)\\
& &~~~+\sum_j(\partial_{\xi_j}p_1+\partial_{\xi_j}p_0)(
D_{x_j}q_{-1}+D_{x_j}q_{-2}+D_{x_j}q_{-3}+\cdots)\\
&=&p_1q_{-1}+(p_1q_{-2}+p_0q_{-1}+\sum_j\partial_{\xi_j}p_1D_{x_j}q_{-1})+\cdots.
\end{eqnarray*}

\begin{lem}\label{le:32}
The following equations hold
\begin{align}
\sigma_{-3}(\widetilde{\Delta}^{-1})&= -\sigma_{-2}(\widetilde{\Delta}^{-1})\Big( \sigma_{1}(\widetilde{\Delta})\sigma_{-2}(\widetilde{\Delta}^{-1})
- \sqrt{-1}\sum_{j=1}^{4}\partial_{\xi_j} (\sigma_{2}(\widetilde{\Delta}) )\partial_{x_j}(\sigma_{-2}(\widetilde{\Delta}^{-1}) )  \Big);\\
\sigma_{-5}(\widetilde{\Delta}^{-2})&=
\sigma_{-3}(\widetilde{\Delta}^{-1})\sigma_{-2}(\widetilde{\Delta}^{-1})
+\sigma_{-2}(\widetilde{\Delta}^{-1})\sigma_{-3}(\widetilde{\Delta}^{-1})
- \sqrt{-1}\sum_{j=1}^{4}\partial_{\xi_j} (\sigma_{-2}(\widetilde{\Delta}^{-1}) )\partial_{x_j}(\sigma_{-2}(\widetilde{\Delta}^{-1}) )  .
\end{align}
\end{lem}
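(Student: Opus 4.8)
The plan is to obtain both identities from the symbol composition formula for pseudodifferential operators, the same device already used in the display preceding the statement. Recall that for operators $A$ and $B$ one has $\sigma(A\circ B)=\sum_{\alpha}\frac{1}{\alpha!}\partial_{\xi}^{\alpha}\sigma(A)\,D_{x}^{\alpha}\sigma(B)$ with $D_{x_{j}}=-\sqrt{-1}\,\partial_{x_{j}}$, the homogeneous component of order $d$ of $\sigma(A\circ B)$ being the sum of all contributions $\frac{1}{\alpha!}\partial_{\xi}^{\alpha}\sigma_{m}(A)\,D_{x}^{\alpha}\sigma_{m'}(B)$ with $m+m'-|\alpha|=d$.

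First I would prove the identity for $\sigma_{-3}(\widetilde{\Delta}^{-1})$ by writing $\widetilde{\Delta}\circ\widetilde{\Delta}^{-1}=\mathrm{Id}$ and comparing homogeneous components. Since $\widetilde{\Delta}=b_{0}^{2}d\delta+a_{0}^{2}\delta d$ is a differential operator of order two, $\sigma(\widetilde{\Delta})=\sigma_{2}(\widetilde{\Delta})+\sigma_{1}(\widetilde{\Delta})+\sigma_{0}(\widetilde{\Delta})$, whereas $\sigma(\widetilde{\Delta}^{-1})=\sum_{j\geq 2}\sigma_{-j}(\widetilde{\Delta}^{-1})$. The order-zero part gives $\sigma_{2}(\widetilde{\Delta})\,\sigma_{-2}(\widetilde{\Delta}^{-1})=\mathrm{Id}$, which together with the companion relation from $\widetilde{\Delta}^{-1}\circ\widetilde{\Delta}=\mathrm{Id}$ identifies $\sigma_{-2}(\widetilde{\Delta}^{-1})$ as the two-sided inverse of $\sigma_{2}(\widetilde{\Delta})$ (this is the first formula of the preceding Lemma). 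The order-$(-1)$ part reads
\[
\sigma_{2}(\widetilde{\Delta})\,\sigma_{-3}(\widetilde{\Delta}^{-1})+\sigma_{1}(\widetilde{\Delta})\,\sigma_{-2}(\widetilde{\Delta}^{-1})-\sqrt{-1}\sum_{j=1}^{4}\partial_{\xi_{j}}\big(\sigma_{2}(\widetilde{\Delta})\big)\,\partial_{x_{j}}\big(\sigma_{-2}(\widetilde{\Delta}^{-1})\big)=0,
\]
and left multiplication by $\sigma_{-2}(\widetilde{\Delta}^{-1})$ isolates $\sigma_{-3}(\widetilde{\Delta}^{-1})$ in the asserted shape. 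Since the symbols here are $\mathrm{End}(\wedge^{*}T^{*}M)$-valued, one must keep the left multiplication and must not move $\sigma_{-2}(\widetilde{\Delta}^{-1})$ across $\sigma_{1}(\widetilde{\Delta})$.

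For the second identity I would apply the same formula to $\widetilde{\Delta}^{-1}\circ\widetilde{\Delta}^{-1}=\widetilde{\Delta}^{-2}$, now with $\sigma(\widetilde{\Delta}^{-1})=\sigma_{-2}(\widetilde{\Delta}^{-1})+\sigma_{-3}(\widetilde{\Delta}^{-1})+\cdots$ in both slots. As each factor starts at order $-2$, the order-$(-4)$ part is $\big(\sigma_{-2}(\widetilde{\Delta}^{-1})\big)^{2}$ — the second formula of the preceding Lemma — while the order-$(-5)$ part is
\[
\sigma_{-5}(\widetilde{\Delta}^{-2})=\sigma_{-2}(\widetilde{\Delta}^{-1})\,\sigma_{-3}(\widetilde{\Delta}^{-1})+\sigma_{-3}(\widetilde{\Delta}^{-1})\,\sigma_{-2}(\widetilde{\Delta}^{-1})-\sqrt{-1}\sum_{j=1}^{4}\partial_{\xi_{j}}\big(\sigma_{-2}(\widetilde{\Delta}^{-1})\big)\,\partial_{x_{j}}\big(\sigma_{-2}(\widetilde{\Delta}^{-1})\big),
\]
the two products arising from the $\alpha=0$ term picking out the degree pairs $(-2,-3)$ and $(-3,-2)$, and the last sum from the $|\alpha|=1$ term. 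This is exactly the claimed formula. The argument is the ``direct computation'' referred to before the statement, and I do not expect any real obstacle; the only points demanding care are the bookkeeping of the noncommutative products and of the factor $-\sqrt{-1}$ coming from $D_{x}=-\sqrt{-1}\,\partial_{x}$, plus, if one wants fully explicit expressions, substituting the closed form of $\sigma_{-2}(\widetilde{\Delta}^{-1})$ furnished by the earlier Lemma.
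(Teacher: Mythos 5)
Your proposal is correct and uses exactly the device the paper relies on: the paper states this lemma without an explicit proof, having just displayed the symbol composition formula $\sigma(A\circ B)=\sum_\alpha\tfrac{1}{\alpha!}\partial_\xi^\alpha\sigma(A)\,D_x^\alpha\sigma(B)$ and expecting the reader to equate homogeneous components of $\widetilde{\Delta}\circ\widetilde{\Delta}^{-1}=\mathrm{Id}$ and $\widetilde{\Delta}^{-1}\circ\widetilde{\Delta}^{-1}=\widetilde{\Delta}^{-2}$. Your bookkeeping of which degree pairs contribute to orders $-1$ and $-5$, the $-\sqrt{-1}$ from $D_x=-\sqrt{-1}\,\partial_x$, and the need to invert $\sigma_2(\widetilde{\Delta})$ on the left (since the symbols are matrix-valued) are all handled correctly and match the stated formulas.
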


\begin{lem}\label{le:32}
The following equations hold
\begin{align}
\sigma_{-2}((\widetilde{D}^{*}\widetilde{D})^{-1})&= \sigma_{-2}(\widetilde{\Delta}^{-1});~~~~
\sigma_{-4}((\widetilde{D}^{*}\widetilde{D})^{-2})= \sigma_{-4}(\widetilde{\Delta}^{-2});\\
\sigma_{-5}((\widetilde{D}^{*}\widetilde{D})^{-2})&=
-\sigma_{-2}(\widetilde{\Delta}^{-1})\sigma_{1}((\widetilde{D}^{*}\widetilde{D})^{-1})\sigma_{-4}(\widetilde{\Delta}^{-2})
-\sigma_{-4}(\widetilde{\Delta}^{-2})\sigma_{1}((\widetilde{D}^{*}\widetilde{D})^{-1})\sigma_{-2}(\widetilde{\Delta}^{-1}) \nonumber\\
&- \sqrt{-1}\sum_{j=1}^{4}\partial_{\xi_j} (\sigma_{-2}(\widetilde{\Delta}^{-1}) )\partial_{x_j}(\sigma_{-2}(\widetilde{\Delta}^{-1}) )  .
\end{align}
\end{lem}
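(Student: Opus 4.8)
The plan is to reduce all the assertions to the symbols of $\widetilde{\Delta}^{-1}$ and $\widetilde{\Delta}^{-2}$ computed in the two preceding lemmas, exploiting that $\widetilde{D}^{*}\widetilde{D}$ differs from $\widetilde{\Delta}$ by a differential operator of order at most one. Indeed, from the displayed decomposition of $\widetilde{D}^{*}\widetilde{D}$,
\begin{equation*}
\widetilde{D}^{*}\widetilde{D}-\widetilde{\Delta}=(b_{0}d+a_{0}\delta)\sqrt{-1}c(X)+\sqrt{-1}c(X)(a_{0}d+b_{0}\delta)+(\sqrt{-1}c(X))^{2},
\end{equation*}
which has order $\le 1$, so $\sigma_{2}(\widetilde{D}^{*}\widetilde{D})=\sigma_{2}(\widetilde{\Delta})=a_{0}^{2}|\xi|^{2}+(b_{0}^{2}-a_{0}^{2})\varepsilon(\xi)\iota(\xi)$, i.e.\ $\widetilde{D}^{*}\widetilde{D}$ and $\widetilde{\Delta}$ have the same leading symbol while differing only in the symbols of order $\le 1$.

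For the two identities in the first line: since $\widetilde{D}^{*}\widetilde{D}$ is elliptic of order $2$, the order $(-2)$ symbol of its inverse is $[\sigma_{2}(\widetilde{D}^{*}\widetilde{D})]^{-1}=[\sigma_{2}(\widetilde{\Delta})]^{-1}=\sigma_{-2}(\widetilde{\Delta}^{-1})$, the last equality being exactly the first formula of the lemma computing the symbols of $\widetilde{\Delta}^{-1},\widetilde{\Delta}^{-2}$. Writing $(\widetilde{D}^{*}\widetilde{D})^{-2}=(\widetilde{D}^{*}\widetilde{D})^{-1}\circ(\widetilde{D}^{*}\widetilde{D})^{-1}$ and retaining the leading term of the composition formula, the order $(-4)$ symbol of $(\widetilde{D}^{*}\widetilde{D})^{-2}$ equals $\big(\sigma_{-2}((\widetilde{D}^{*}\widetilde{D})^{-1})\big)^{2}=\big(\sigma_{-2}(\widetilde{\Delta}^{-1})\big)^{2}=\sigma_{-4}(\widetilde{\Delta}^{-2})$, where the last step is the identity $\sigma_{-4}(\widetilde{\Delta}^{-2})=(\sigma_{-2}(\widetilde{\Delta}^{-1}))^{2}$ recorded in that same lemma.

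For the identity in the second line I would first compute $\sigma_{-3}((\widetilde{D}^{*}\widetilde{D})^{-1})$ by matching the order $(-1)$ part of $(\widetilde{D}^{*}\widetilde{D})\circ(\widetilde{D}^{*}\widetilde{D})^{-1}=\mathrm{Id}$ in the composition formula, which proceeds exactly as the computation of $\sigma_{-3}(\widetilde{\Delta}^{-1})$ with $\widetilde{\Delta}$ replaced by $\widetilde{D}^{*}\widetilde{D}$; inserting $\sigma_{2}(\widetilde{D}^{*}\widetilde{D})=\sigma_{2}(\widetilde{\Delta})$ and $\sigma_{-2}((\widetilde{D}^{*}\widetilde{D})^{-1})=\sigma_{-2}(\widetilde{\Delta}^{-1})$ yields
\begin{equation*}
\sigma_{-3}((\widetilde{D}^{*}\widetilde{D})^{-1})=-\sigma_{-2}(\widetilde{\Delta}^{-1})\Big(\sigma_{1}(\widetilde{D}^{*}\widetilde{D})\,\sigma_{-2}(\widetilde{\Delta}^{-1})-\sqrt{-1}\sum_{j=1}^{4}\partial_{\xi_{j}}(\sigma_{2}(\widetilde{\Delta}))\,\partial_{x_{j}}(\sigma_{-2}(\widetilde{\Delta}^{-1}))\Big),
\end{equation*}
where $\sigma_{1}(\widetilde{D}^{*}\widetilde{D})$ denotes the order-one part of $\widetilde{D}^{*}\widetilde{D}$ (this is the quantity written $\sigma_{1}((\widetilde{D}^{*}\widetilde{D})^{-1})$ in the statement). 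Then, by order counting, only three products of the relevant lower-order symbols contribute to the order $(-5)$ part of $(\widetilde{D}^{*}\widetilde{D})^{-1}\circ(\widetilde{D}^{*}\widetilde{D})^{-1}$, giving
\begin{equation*}
\sigma_{-5}((\widetilde{D}^{*}\widetilde{D})^{-2})=\sigma_{-2}(\widetilde{\Delta}^{-1})\,\sigma_{-3}((\widetilde{D}^{*}\widetilde{D})^{-1})+\sigma_{-3}((\widetilde{D}^{*}\widetilde{D})^{-1})\,\sigma_{-2}(\widetilde{\Delta}^{-1})-\sqrt{-1}\sum_{j=1}^{4}\partial_{\xi_{j}}(\sigma_{-2}(\widetilde{\Delta}^{-1}))\,\partial_{x_{j}}(\sigma_{-2}(\widetilde{\Delta}^{-1})).
\end{equation*}
Substituting the above expression for $\sigma_{-3}((\widetilde{D}^{*}\widetilde{D})^{-1})$ and using $(\sigma_{-2}(\widetilde{\Delta}^{-1}))^{2}=\sigma_{-4}(\widetilde{\Delta}^{-2})$ to collapse the two double products into $-\sigma_{-2}(\widetilde{\Delta}^{-1})\sigma_{1}(\widetilde{D}^{*}\widetilde{D})\sigma_{-4}(\widetilde{\Delta}^{-2})-\sigma_{-4}(\widetilde{\Delta}^{-2})\sigma_{1}(\widetilde{D}^{*}\widetilde{D})\sigma_{-2}(\widetilde{\Delta}^{-1})$, and reorganising the remaining derivative contributions exactly as in the derivation of $\sigma_{-5}(\widetilde{\Delta}^{-2})$ in the preceding lemma, one arrives at the asserted formula.

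The step I expect to be the main obstacle is the bookkeeping forced by non-commutativity: since $\sigma_{2}(\widetilde{\Delta})$ is matrix-valued and not a scalar, the symbols entering the composition formula do not commute in general, so the two summands $\sigma_{-2}(\widetilde{\Delta}^{-1})\sigma_{-3}$ and $\sigma_{-3}\sigma_{-2}(\widetilde{\Delta}^{-1})$ must be kept in order, and when $\sigma_{-3}((\widetilde{D}^{*}\widetilde{D})^{-1})$ is inserted one may only use that $\sigma_{-2}(\widetilde{\Delta}^{-1})$ commutes with $\sigma_{2}(\widetilde{\Delta})$ — being its inverse — and not with $\sigma_{1}(\widetilde{D}^{*}\widetilde{D})$ or with the $\partial_{\xi}$, $\partial_{x}$ derivatives. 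Once the ordering of factors is tracked carefully, the rest is the routine symbol calculus already performed for $\widetilde{\Delta}^{-1}$ and $\widetilde{\Delta}^{-2}$.
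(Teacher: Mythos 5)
Your approach is the right one and matches what the paper implicitly does (the paper states this lemma without proof, relying on the composition formula for pseudodifferential operators together with the observation that $\widetilde{D}^{*}\widetilde{D}-\widetilde{\Delta}$ has order $\le 1$). The derivations of the order $(-2)$ and $(-4)$ symbols are correct, your identification of the notational abuse $\sigma_1((\widetilde{D}^{*}\widetilde{D})^{-1})$ as an order-one symbol of $\widetilde{D}^{*}\widetilde{D}$ rather than of its inverse is correct (compare the paper's equation (3.23)), and the general composition-formula expression for $\sigma_{-5}$ in terms of $\sigma_{-2}$ and $\sigma_{-3}$ is correct.

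The one step you wave at too quickly is the final ``reorganising.'' If you substitute your expression for $\sigma_{-3}((\widetilde{D}^{*}\widetilde{D})^{-1})$ carefully you obtain
\begin{equation*}
\sigma_{-5}((\widetilde{D}^{*}\widetilde{D})^{-2})
=\sigma_{-5}(\widetilde{\Delta}^{-2})
-\sigma_{-2}(\widetilde{\Delta}^{-1})\,A\,\sigma_{-4}(\widetilde{\Delta}^{-2})
-\sigma_{-4}(\widetilde{\Delta}^{-2})\,A\,\sigma_{-2}(\widetilde{\Delta}^{-1}),
\end{equation*}
where $A=\sigma_{1}(\widetilde{D}^{*}\widetilde{D}-\widetilde{\Delta})$ is the order-one symbol of the cross terms involving $c(X)$, and $\sigma_{-5}(\widetilde{\Delta}^{-2})$ carries the $\sigma_{-3}(\widetilde{\Delta}^{-1})$ contributions of Lemma~3.4 in addition to the $\partial_{\xi}\sigma_{-2}\,\partial_{x}\sigma_{-2}$ term. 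The lemma as printed contains neither $\sigma_{-3}(\widetilde{\Delta}^{-1})\sigma_{-2}+\sigma_{-2}\sigma_{-3}(\widetilde{\Delta}^{-1})$ nor the $\sigma_{1}(\widetilde{\Delta})$ piece of your $\sigma_{1}(\widetilde{D}^{*}\widetilde{D})$; both are legitimately dropped only because the statement is used exclusively at the normal-coordinate point $x_{0}$, where $\partial_{x_{j}}\sigma_{-2}(\widetilde{\Delta}^{-1})(x_{0})=0$ and $\sigma_{-3}(\widetilde{\Delta}^{-1})(x_{0})=0$ (hence also $\sigma_{1}(\widetilde{\Delta})(x_{0})=0$), as the paper records immediately after the lemma. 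You should make this restriction to $x_{0}$ explicit rather than appeal to an unspecified reorganisation, since the displayed identity is not a pointwise-general symbol identity and your own intermediate formula makes the discrepancy visible. With that caveat stated, the argument is complete.
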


Let $g^{ij}=g(\text {d}x_{i},\text{d}x_{j})$, $\xi=\sum_{j}\xi_{j}dx_{j}$ and $\nabla^L_{\partial_{i}}\partial_{j}=\sum_{k}\Gamma_{ij}^{k}\partial_{k}$.
 In what follows, using the Einstein sum convention for repeated index summation:
\begin{align}\label{p2}
\sigma_{i}=-\frac{1}{4}\sum_{s,t}\omega_{s,t}
(e_i)c(e_s)c(e_t),~~ \xi^{j}=g^{ij}\xi_{i},~~\Gamma^{k}=g^{ij}\Gamma_{ij}^{k},~~\sigma^{j}=g^{ij}\sigma_{i}.
\end{align}
 Taking normal coordinates about $x_0$, then
$\sigma^i(x_0)=0,~
\partial^j[c(\partial_j)](x_0)=0 $,  $~\Gamma^k(x_0)=0$, $g^{ij}(x_0)=\delta_i^j$, $\partial^x_\mu g^{\alpha\beta}(x_0)=0.$
Then we get
$\partial_{x_j}\big(\sigma_{-2}(\widetilde{\Delta}^{-1})\big)(x_0)=0$, $\sigma_{-3}(\widetilde{\Delta}^{-1})(x_0)=0$.

From Lemma 3.3, Lemma 3.4 and Lemma 3.5, we obtain
\begin{align}\label{ABD}
& \sigma_{-4}
\Big( \widetilde{c}(u) \widetilde{c}(v) \widetilde{c}(w)\widetilde{D}(\widetilde{D}^{*}\widetilde{D})^{-2}\Big)(x_0) \nonumber\\
 =&\widetilde{c}(u) \widetilde{c}(v) \widetilde{c}(w)\Big(
 \sigma_{0}(a_0\text{d}+b_0\delta+\sqrt{-1}c(X)) \sigma_{-4}((\widetilde{D}^{*}\widetilde{D})^{-2})
 +  \sigma_{1}(a_0\text{d}+b_0\delta+\sqrt{-1}c(X)) \sigma_{-5}((\widetilde{D}^{*}\widetilde{D})^{-2}) \nonumber\\
&-\sqrt{-1}\sum_{j=1}^{4}  \partial_{\xi_{j}} (\sigma_{1}(\widetilde{D})) \partial_{x_{j}}(\sigma_{-4}((\widetilde{D}^{*}\widetilde{D})^{-2}))\Big)(x_0)\nonumber\\
 =&\widetilde{c}(u) \widetilde{c}(v) \widetilde{c}(w)\sqrt{-1}c(X)\sigma_{-4}(\Delta^{-2})(x_0)
 +\sqrt{-1}\widetilde{c}(u) \widetilde{c}(v) \widetilde{c}(w)\widetilde{c}( \xi)\sigma_{-5}((\widetilde{D}^{*}\widetilde{D})^{-2})(x_0)\nonumber\\
 =&H_{1}(x_0)+H_{2}(x_0),
\end{align}
where
\begin{align}\label{ABD}
H_{1}(x_0)=&\widetilde{c}(u) \widetilde{c}(v) \widetilde{c}(w)\sqrt{-1}c(X)\sigma_{-4}(\Delta^{-2})(x_0)\nonumber\\
=&\widetilde{c}(u) \widetilde{c}(v) \widetilde{c}(w)\sqrt{-1}c(X)
\frac{b_{0}^{4}|\xi|^{2}+(a_{0}^{4}-b_{0}^{4})\varepsilon(\xi)\iota(\xi)}{a_{0}^{4}b_{0}^{4}|\xi|^{6}};
\end{align}
and
\begin{align}\label{ABD}
H_{2}(x_0)=&\sqrt{-1}\widetilde{c}(u) \widetilde{c}(v) \widetilde{c}(w)\widetilde{c}( \xi)\sigma_{-5}((\widetilde{D}^{*}\widetilde{D})^{-2})(x_0)\nonumber\\
=&\sqrt{-1}\widetilde{c}(u) \widetilde{c}(v) \widetilde{c}(w)\widetilde{c}( \xi)
\Big(-\sigma_{-2}(\widetilde{\Delta}^{-1})\sigma_{1}((\widetilde{D}^{*}\widetilde{D})^{-1})\sigma_{-4}(\widetilde{\Delta}^{-2})\nonumber\\
&-\sigma_{-4}(\widetilde{\Delta}^{-2})\sigma_{1}((\widetilde{D}^{*}\widetilde{D})^{-1})\sigma_{-2}(\widetilde{\Delta}^{-1})\Big)(x_0)\nonumber\\
=& -\sqrt{-1}\widetilde{c}(u) \widetilde{c}(v) \widetilde{c}(w)\widetilde{c}( \xi)
\sigma_{-2}(\widetilde{\Delta}^{-1})\sigma_{1}((\widetilde{D}^{*}\widetilde{D})^{-1})\sigma_{-4}(\widetilde{\Delta}^{-2})(x_0)\nonumber\\
&-\sqrt{-1}\widetilde{c}(u) \widetilde{c}(v) \widetilde{c}(w)
\widetilde{c}(\xi)\sigma_{-4}(\widetilde{\Delta}^{-2})\sigma_{1}((\widetilde{D}^{*}\widetilde{D})^{-1})\sigma_{-2}(\widetilde{\Delta}^{-1}) (x_0),
\end{align}
where
\begin{align}
\sigma_{1}((\widetilde{D}^{*}\widetilde{D})^{-1})(x_0)
=&\sigma_{1}(b_0\text{d}+a_0\delta)\sqrt{-1}c(X)(x_0)+\sqrt{-1} c(X)\sigma_{1}(a_0\text{d}+b_0\delta)(x_0)\nonumber\\
=&\sqrt{-1}\bar{c}(\xi)\sqrt{-1}c(X)+\sqrt{-1}c(X)\sqrt{-1}\widetilde{c}(\xi),
\end{align}
Where $\bar{c}(\xi)=b_0\epsilon(\xi)-a_0\iota(\xi), \widetilde{c}(\xi)= a_0\epsilon(\xi)-b_0\iota(\xi).$

$\mathbf{Step~~II}$:  noncommutative residue representation.

Let $\widetilde{c}(e_j)=a_{0}\epsilon (e_j^* )-b_{0}\iota
(e_j)$,
  $e_j^*=g^{TM}(e_j,\cdot)$, and $\epsilon (e_j^*)$,~$\iota (e_j)$ be the exterior and interior multiplications respectively,
which satisfies
\begin{lem}
The following equations hold
\begin{align}\label{ee1}
&\widetilde{c}(X)c(Y)+c(Y)\widetilde{c}(X)=-(a_0+b_0)g^{TM}(X,Y);\nonumber\\
&\widetilde{c}(X)\widetilde{c}(Y)+\widetilde{c}(Y)\widetilde{c}(X)=-2a_0b_0g^{TM}(X,Y);\nonumber\\
&\widetilde{c}(X)\widehat{c}(Y)+\widehat{c}(Y)\widetilde{c}(X)=(a_0-b_0)g^{TM}(X,Y).
\end{align}
\end{lem}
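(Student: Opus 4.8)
The plan is to reduce all three identities to the fundamental Clifford relations already recorded in \eqref{ali1}. Since the maps $X\mapsto \widetilde{c}(X)$, $X\mapsto c(X)$, $X\mapsto \widehat{c}(X)$ and the bilinear form $g^{TM}$ are all $\R$-linear (and the right-hand sides are symmetric in $X,Y$), it suffices to check each identity for $X=e_i$, $Y=e_j$ running over the fixed orthonormal frame, where $g^{TM}(e_i,e_j)=\delta_{ij}$ and the exterior/interior operators obey $\epsilon(e_i^*)\epsilon(e_j^*)+\epsilon(e_j^*)\epsilon(e_i^*)=0$, $\iota(e_i)\iota(e_j)+\iota(e_j)\iota(e_i)=0$, and $\epsilon(e_i^*)\iota(e_j)+\iota(e_j)\epsilon(e_i^*)=\delta_{ij}$.

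The key algebraic observation I would use is the decomposition
\[
\widetilde{c}(e_j)=a_0\epsilon(e_j^*)-b_0\iota(e_j)=\frac{a_0-b_0}{2}\,\widehat{c}(e_j)+\frac{a_0+b_0}{2}\,c(e_j),
\]
which is immediate from $\widehat{c}(e_j)=\epsilon(e_j^*)+\iota(e_j)$ and $c(e_j)=\epsilon(e_j^*)-\iota(e_j)$. Substituting this into each anticommutator reduces the problem to linear combinations of the three anticommutators $\{\widehat{c},\widehat{c}\}=2g^{TM}$, $\{c,\widehat{c}\}=0$, $\{c,c\}=-2g^{TM}$ supplied by \eqref{ali1}.

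Then I would simply read off the three cases. In $\widetilde{c}(X)c(Y)+c(Y)\widetilde{c}(X)$ only the term carrying $c(X)c(Y)$ survives, giving $\tfrac{a_0+b_0}{2}(-2g^{TM}(X,Y))=-(a_0+b_0)g^{TM}(X,Y)$. In $\widetilde{c}(X)\widehat{c}(Y)+\widehat{c}(Y)\widetilde{c}(X)$ only the term carrying $\widehat{c}(X)\widehat{c}(Y)$ survives, giving $\tfrac{a_0-b_0}{2}(2g^{TM}(X,Y))=(a_0-b_0)g^{TM}(X,Y)$. Finally, for $\widetilde{c}(X)\widetilde{c}(Y)+\widetilde{c}(Y)\widetilde{c}(X)$ one expands both factors via the decomposition; the mixed terms $\{\widehat{c},c\}$ cancel, leaving $\tfrac{(a_0-b_0)^2}{4}(2g^{TM})+\tfrac{(a_0+b_0)^2}{4}(-2g^{TM})=\tfrac{(a_0-b_0)^2-(a_0+b_0)^2}{2}g^{TM}=-2a_0b_0\,g^{TM}(X,Y)$.

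The computation is entirely routine; no genuine obstacle is expected. The only place where care is needed is the bookkeeping of signs and of the factors $\tfrac12$ in the decomposition. One may equally well bypass the decomposition and expand directly in $\epsilon,\iota$, in which case the four products $\epsilon(X^*)\epsilon(Y^*)$, $\iota(X)\iota(Y)$, $\epsilon(X^*)\iota(Y)$, $\iota(X)\epsilon(Y^*)$ must be paired with their transposed counterparts correctly — that is where a sign slip would be most likely, so I would double-check the final coefficients against the direct expansion.
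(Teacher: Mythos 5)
Your proof is correct. The paper states this lemma without proof (it is a routine verification), so there is no in-text argument to compare against; your derivation via the decomposition $\widetilde{c}(e_j)=\tfrac{a_0-b_0}{2}\,\widehat{c}(e_j)+\tfrac{a_0+b_0}{2}\,c(e_j)$ is a clean way to reduce all three anticommutators to the relations in \eqref{ali1}, and the coefficients $-(a_0+b_0)$, $-2a_0b_0=\tfrac{(a_0-b_0)^2-(a_0+b_0)^2}{2}$, and $(a_0-b_0)$ all check out. One small wording nit: in the middle identity the cross terms do not so much ``cancel'' as vanish outright, since $\{\widehat{c},c\}=0$ by the third relation of \eqref{ali1}.
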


 \begin{lem}
The following identities hold:
 \begin{align}
 &{\rm{Tr}}\big(\widetilde{c}(u) \widetilde{c}(v) \widetilde{c}(w)c(X)\big)
 = \frac{a_{0} b_{0}(a_{0}+b_{0})}{2}[g(u,X)g(v,w)-g(v,X)g(u,w)+g(w,X)g(u,v)]{\rm{Tr}}(\rm{Id}) ;\\
 &{\rm{Tr}}\big(\widetilde{c}(u) \widetilde{c}(v) \widetilde{c}(w)c(X)  \epsilon (\xi)\iota(\xi) \big)
=  \frac{a_{0} b_{0}|\xi|^{2}(a_{0}+b_{0})}{4}[g(u,X)g(v,w)-g(v,X)g(u,w)+g(w,X)g(u,v)]{\rm{Tr}}(\rm{Id})\nonumber\\
 &~~~~~~~~~~~~~~~~~~~~~~~~~~~~~~~~~~~~~+\frac{ (a_{0}^{2}b_{0}-a_{0}b_{0}^{2})}{4}\xi(X)[\xi(u)g(v,w)-\xi(v)g(u,w)+\xi(w)g(u,v)]{\rm{Tr}}(\rm{Id}).
\end{align}
\end{lem}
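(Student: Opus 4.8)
The plan is to deduce both identities from the three anticommutation relations of Lemma~3.6, the cyclicity of the trace, and the one elementary fact
\[
\mathrm{Tr}\big(\widetilde c(a)\widetilde c(b)\big)=-a_0b_0\,g(a,b)\,\mathrm{Tr}(\mathrm{Id}),
\]
which is immediate from tracing the second relation of Lemma~3.6. Throughout I identify one-forms with vector fields via $g^{TM}$, so that $g(a,\xi)=\xi(a)$, and I use the two dual decompositions $\widetilde c(e_j)=\tfrac{a_0+b_0}{2}c(e_j)+\tfrac{a_0-b_0}{2}\widehat c(e_j)$ and $\epsilon(e_j^*)=\tfrac12\big(c(e_j)+\widehat c(e_j)\big)$, $\iota(e_j)=\tfrac12\big(\widehat c(e_j)-c(e_j)\big)$, both of which are verified by comparing with $\widehat c(e_j)=\epsilon(e_j^*)+\iota(e_j)$, $c(e_j)=\epsilon(e_j^*)-\iota(e_j)$.

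For the first identity I would move $c(X)$ leftwards through $\widetilde c(w),\widetilde c(v),\widetilde c(u)$ by repeated use of $\widetilde c(Y)c(X)=-(a_0+b_0)g(X,Y)-c(X)\widetilde c(Y)$. This rewrites $\widetilde c(u)\widetilde c(v)\widetilde c(w)c(X)$ as $-c(X)\widetilde c(u)\widetilde c(v)\widetilde c(w)$ plus three ``contact'' terms of the form $g(\bullet,X)\,\widetilde c(\bullet)\widetilde c(\bullet)$. Taking the trace and using cyclicity to identify $\mathrm{Tr}\big(c(X)\widetilde c(u)\widetilde c(v)\widetilde c(w)\big)$ with $T:=\mathrm{Tr}\big(\widetilde c(u)\widetilde c(v)\widetilde c(w)c(X)\big)$, one gets a closed relation $2T=\cdots$; inserting the value of $\mathrm{Tr}(\widetilde c(a)\widetilde c(b))$ produces precisely the stated expression.

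For the second identity the first step is to turn $\epsilon(\xi)\iota(\xi)$ into a Clifford expression: from the decompositions above together with $c(\xi)^2=-|\xi|^2$, $\widehat c(\xi)^2=|\xi|^2$ and $c(\xi)\widehat c(\xi)=-\widehat c(\xi)c(\xi)$ one obtains $\epsilon(\xi)\iota(\xi)=\tfrac12|\xi|^2+\tfrac12\,c(\xi)\widehat c(\xi)$. The $\tfrac12|\xi|^2$ piece, combined with the first identity, already accounts for the $|\xi|^2$-term of the claim, so it only remains to compute $S:=\mathrm{Tr}\big(\widetilde c(u)\widetilde c(v)\widetilde c(w)c(X)c(\xi)\widehat c(\xi)\big)$. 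Here I repeat the commute-to-the-left device: $\widehat c(\xi)$ passes $c(\xi)$ and $c(X)$ with a sign change each (by the third relation of $(\ref{ali1})$), and passes $\widetilde c(w),\widetilde c(v),\widetilde c(u)$ via $\widetilde c(Y)\widehat c(\xi)=(a_0-b_0)\xi(Y)-\widehat c(\xi)\widetilde c(Y)$; cyclicity of the trace then gives $2S=\cdots$ expressed through the $4$-fold traces $\mathrm{Tr}\big(\widetilde c(a)\widetilde c(b)c(X)c(\xi)\big)$ with $a,b\in\{u,v,w\}$.

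Each of these $4$-fold traces is evaluated by inserting $\widetilde c=\tfrac{a_0+b_0}{2}c+\tfrac{a_0-b_0}{2}\widehat c$ and using the standard facts on $\Lambda^*(T^*M)$ that traces of odd-length words in $c$ (or in $\widehat c$) vanish and that $\mathrm{Tr}(\text{word in }c\cdot\text{word in }\widehat c)=\mathrm{Tr}(\mathrm{Id})^{-1}\,\mathrm{Tr}(\text{word in }c)\,\mathrm{Tr}(\text{word in }\widehat c)$, so that only the purely $c$-term and the $\widehat c(a)\widehat c(b)\cdot c(X)c(\xi)$-term survive, both handled by the usual pairing formula for Clifford traces (as in \cite{Ka}). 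Summing the three $4$-fold traces, the contributions proportional to $(a_0+b_0)^2$ cancel in antisymmetric pairs, so that $2S=a_0b_0(a_0-b_0)\,\xi(X)\big[\xi(u)g(v,w)-\xi(v)g(u,w)+\xi(w)g(u,v)\big]\mathrm{Tr}(\mathrm{Id})$; since $\epsilon(\xi)\iota(\xi)$ contributes the extra factor $\tfrac12$ and $a_0b_0(a_0-b_0)=a_0^2b_0-a_0b_0^2$, the resulting $\tfrac12 S$ is exactly the second summand of the claim while the $\tfrac12|\xi|^2$ piece gives the first. The only real obstacle is the sign bookkeeping through the Clifford commutations and verifying the $(a_0+b_0)^2$-cancellation; everything else is routine.
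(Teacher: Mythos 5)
Your proof is correct. For the first identity you follow exactly the paper's route: repeatedly push $c(X)$ leftwards with $\widetilde c(Y)c(X)=-(a_0+b_0)g(X,Y)-c(X)\widetilde c(Y)$, cycle the trace to close the relation $2T=\cdots$, and evaluate the $2$-fold traces from the second relation of Lemma~3.6. For the second identity, however, you take a genuinely different tack. The paper commutes $c(X)$ through the three $\widetilde c$'s and then back through $\epsilon(\xi)\iota(\xi)$ directly, picking up the extra contact terms $\xi(X)\mathrm{Tr}(\widetilde c(u)\widetilde c(v)\widetilde c(w)\epsilon(\xi))$ and $\xi(X)\mathrm{Tr}(\widetilde c(u)\widetilde c(v)\widetilde c(w)\iota(\xi))$, and then tabulates a list of auxiliary traces such as $\mathrm{Tr}(\widetilde c(a)\widetilde c(b)\epsilon(\xi)\iota(\xi))$ and $\mathrm{Tr}(\widetilde c(u)\widetilde c(v)\widetilde c(w)\epsilon(\xi))$ individually. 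You instead decompose $\epsilon(\xi)\iota(\xi)=\tfrac12|\xi|^2+\tfrac12 c(\xi)\widehat c(\xi)$, which lets the $\tfrac12|\xi|^2$ piece fall out of the first identity with no further work; the remaining $\tfrac12 c(\xi)\widehat c(\xi)$ piece is handled by the same commute-and-cycle device but applied to $\widehat c(\xi)$, reducing everything to three $4$-fold traces $\mathrm{Tr}(\widetilde c(a)\widetilde c(b)c(X)c(\xi))$ that you evaluate via the splitting $\widetilde c=\tfrac{a_0+b_0}{2}c+\tfrac{a_0-b_0}{2}\widehat c$ and the factorization of mixed $c$/$\widehat c$ traces. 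I checked the $(a_0+b_0)^2$-cancellation and the final coefficient $a_0b_0(a_0-b_0)=a_0^2b_0-a_0b_0^2$; both are right. Your route is arguably cleaner because it reuses the first identity and isolates the $\xi$-dependence in a single Clifford word $c(\xi)\widehat c(\xi)$, at the cost of invoking the trace-factorization fact for products of $c$-words and $\widehat c$-words, which the paper's approach avoids by computing each $\epsilon$/$\iota$ trace by hand.
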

\begin{proof}
By Lemma (3.6) and the relation of the Clifford action and $ {\rm{Tr}}(AB)= {\rm{Tr}}(BA) $, we have the equality:
 \begin{align}
 {\rm{Tr}}\big(\widetilde{c}(u) \widetilde{c}(v) \widetilde{c}(w)c(X)\big)
 =& {\rm{Tr}}\big(\widetilde{c}(u) \widetilde{c}(v) (-c(X)\widetilde{c}(w)-(a_{0}+b_{0})g(X,w))\big)\nonumber\\
  =& -{\rm{Tr}}\big(\widetilde{c}(u) \widetilde{c}(v) c(X)\widetilde{c}(w)\big)
 -(a_{0}+b_{0})g(X,w) {\rm{Tr}}\big(\widetilde{c}(u) \widetilde{c}(v) \big)\nonumber\\
  =& \cdots \nonumber\\
    =& - {\rm{Tr}}\big(\widetilde{c}(u) \widetilde{c}(v) \widetilde{c}(w)c(X)\big) -(a_{0}+b_{0})g(X,u) {\rm{Tr}}\big(\widetilde{c}(v) \widetilde{c}(w) \big)\nonumber\\
     &+(a_{0}+b_{0})g(X,v) {\rm{Tr}}\big(\widetilde{c}(u) \widetilde{c}(w) \big)-(a_{0}+b_{0})g(X,w) {\rm{Tr}}\big(\widetilde{c}(u) \widetilde{c}(v) \big).
\end{align}
By transferring the first term on the right side of (3.27) to the left, the first equation of this Lemma can be obtained.

Similarly, it can be directly calculated
\begin{align}
  &{\rm{Tr}}\big(\widetilde{c}(u) \widetilde{c}(v) \widetilde{c}(w)c(X)  \epsilon (\xi)\iota(\xi) \big)\nonumber\\
 =&{\rm{Tr}}\big(-\widetilde{c}(u) \widetilde{c}(v) (c(X)\widetilde{c}(w)-(a_{0}+b_{0})g(X,w))  \epsilon (\xi)\iota(\xi) \big)\nonumber\\
  =&-{\rm{Tr}}\big(\widetilde{c}(u) \widetilde{c}(v)  c(X)\widetilde{c}(w)   \epsilon (\xi)\iota(\xi) \big)
   -(a_{0}+b_{0})g(X,w){\rm{Tr}}\big(\widetilde{c}(u) \widetilde{c}(v)   \epsilon (\xi)\iota(\xi) \big)\nonumber\\
     =& \cdots \nonumber\\
 =&-{\rm{Tr}}\big(\widetilde{c}(u) \widetilde{c}(v) \widetilde{c}(w)c(X)  \epsilon (\xi)\iota(\xi) \big)
 -(a_{0}+b_{0})g(X,w){\rm{Tr}}\big(\widetilde{c}(u) \widetilde{c}(v)   \epsilon (\xi)\iota(\xi) \big)\nonumber\\
 &-(a_{0}+b_{0})g(X,u){\rm{Tr}}\big(\widetilde{c}(v) \widetilde{c}(w)   \epsilon (\xi)\iota(\xi) \big) +(a_{0}+b_{0})g(X,v){\rm{Tr}}\big(\widetilde{c}(u) \widetilde{c}(w)   \epsilon (\xi)\iota(\xi) \big)\nonumber\\
& -\xi(X){\rm{Tr}}\big(\widetilde{c}(u) \widetilde{c}(v)\widetilde{c}(w)  \iota(\xi) \big)
-\xi(X){\rm{Tr}}\big(\widetilde{c}(u) \widetilde{c}(v)\widetilde{c}(w)  \epsilon (\xi) \big).
\end{align}
Similarly, we obtain
\begin{align}
 &{\rm{Tr}}\big(\widetilde{c}(u) \widetilde{c}(v)   \epsilon (\xi)\iota(\xi) \big)=-\frac{a_{0} b_{0}}{2}|\xi|^{2}g(u,v){\rm{Tr}}(\rm{Id});
  {\rm{Tr}}\big(\widetilde{c}(v) \widetilde{c}(w)   \epsilon (\xi)\iota(\xi) \big)=-\frac{a_{0} b_{0}}{2}|\xi|^{2}g(v,w){\rm{Tr}}(\rm{Id});\nonumber\\
& {\rm{Tr}}\big(\widetilde{c}(u) \widetilde{c}(w)   \epsilon (\xi)\iota(\xi) \big)=-\frac{a_{0} b_{0}}{2}|\xi|^{2}g(u,w){\rm{Tr}}(\rm{Id});\nonumber\\
& {\rm{Tr}}\big(\widetilde{c}(u) \widetilde{c}(v)\widetilde{c}(w)  \iota(\xi) \big)
 =\frac{ -a_{0}^{2}b_{0}}{4}\big[\xi(u)g(v,w)-\xi(v)g(u,w)+\xi(w)g(u,v)\big]{\rm{Tr}}(\rm{Id});\nonumber\\
 &{\rm{Tr}}\big(\widetilde{c}(u) \widetilde{c}(v)\widetilde{c}(w)  \epsilon (\xi) \big)
 =\frac{ a_{0}b_{0}^{2}}{4}\big[\xi(u)g(v,w)-\xi(v)g(u,w)+\xi(w)g(u,v)\big]{\rm{Tr}}(\rm{Id}).
\end{align}
By  (3.28) and (3.29), the second  equation of this Lemma can be obtained.
\end{proof}

 \begin{lem}
The following identities hold:
 \begin{align}
 (1)~~&{\rm{Tr}}\Big(\widetilde{c}(u) \widetilde{c}(v) \widetilde{c}(w)\widetilde{c}(\xi)\big( \bar{c}(\xi)c(X)+c(X)\widetilde{c}(\xi) \big)\Big)\nonumber\\
 =& -a_{0}^{2}b_{0}^{2}(a_{0}+b_{0})\xi(X)[\xi(u)g(v,w)-\xi(v)g(u,w)+\xi(w)g(u,v)]{\rm{Tr}}(\rm{Id});\\
 (2)~~ &{\rm{Tr}}\Big(\widetilde{c}(u) \widetilde{c}(v) \widetilde{c}(w)\widetilde{c}(\xi)\big( \bar{c}(\xi)c(X) \epsilon (\xi)\iota(\xi)+c(X)\widetilde{c}(\xi) \epsilon (\xi)\iota(\xi) \big)\Big)\nonumber\\
 =&\frac{a_{0}^{4}b_{0} -5a_{0}^{2}b_{0}^{3} }{4}
 |\xi|^{2}\xi(X)\big[\xi(u)g(v,w)-\xi(v)g(u,w)+\xi(w)g(u,v)\big]{\rm{Tr}}(\rm{Id})\nonumber\\
   &-\frac{a_{0}^{2}b_{0}(a_{0}+ b_{0})(a_{0}- b_{0})}{4}|\xi|^{4}\big[g(X,w)g(u,v)-g(X,v)g(u,w)+g(X,u)g(v,w)\big]{\rm{Tr}}(\rm{Id});\\
   (3)~~  &{\rm{Tr}}\Big(\widetilde{c}(u) \widetilde{c}(v) \widetilde{c}(w)\widetilde{c}(\xi)\big( \epsilon (\xi)\iota(\xi)\bar{c}(\xi)c(X)+ \epsilon (\xi)\iota(\xi)c(X)\widetilde{c}(\xi) \big)\Big)\nonumber\\
 =&-a_{0}^{2}b_{0}^{3}|\xi|^{2}\xi(X)[\xi(u)g(v,w)-\xi(v)g(u,w)+\xi(w)g(u,v)]{\rm{Tr}}(\rm{Id});\\
   (4)~~  &{\rm{Tr}}\Big(\widetilde{c}(u) \widetilde{c}(v) \widetilde{c}(w)\widetilde{c}(\xi)\big( \epsilon (\xi)\iota(\xi)\bar{c}(\xi)c(X)\epsilon (\xi)\iota(\xi)+ \epsilon (\xi)\iota(\xi)c(X)\widetilde{c}(\xi)\epsilon (\xi)\iota(\xi) \big)\Big)\nonumber\\
 =&-a_{0}^{2}b_{0}^{3}|\xi|^{4}\xi(X)[\xi(u)g(v,w)-\xi(v)g(u,w)+\xi(w)g(u,v)]{\rm{Tr}}(\rm{Id}) .
\end{align}
\end{lem}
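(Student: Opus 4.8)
The plan is to reduce all four identities to the trace formulas of Lemma~3.7 (together with their $\widehat c$-analogues) by purely algebraic manipulations in the algebra generated by $c,\widehat c,\widetilde c,\bar c$ and by the cyclicity of the trace. The decisive preliminary step is the observation that the bracket $\bar c(\xi)c(X)+c(X)\widetilde c(\xi)$ common to (1)--(4) reduces to a scalar plus a single Clifford monomial: since $\widetilde c(\xi)-\bar c(\xi)=(a_0-b_0)\widehat c(\xi)$ we have $\bar c(\xi)=\widetilde c(\xi)-(a_0-b_0)\widehat c(\xi)$, and hence, by the first relation of Lemma~3.6,
\begin{equation*}
\bar c(\xi)c(X)+c(X)\widetilde c(\xi)=\bigl(\widetilde c(\xi)c(X)+c(X)\widetilde c(\xi)\bigr)-(a_0-b_0)\widehat c(\xi)c(X)=-(a_0+b_0)\xi(X)-(a_0-b_0)\widehat c(\xi)c(X).
\end{equation*}
Substituting this into each of (1)--(4) splits the corresponding trace into a ``scalar'' piece, proportional to $\xi(X)$ times a trace of $\widetilde c(u)\widetilde c(v)\widetilde c(w)\widetilde c(\xi)$ possibly decorated by one or two copies of $\varepsilon(\xi)\iota(\xi)$, and a ``$\widehat c$'' piece carrying the factor $\widehat c(\xi)c(X)$.

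Before evaluating these pieces I would assemble the auxiliary algebraic identities: the squares $\widehat c(\xi)^2=|\xi|^2$, $c(\xi)^2=-|\xi|^2$, $\widetilde c(\xi)^2=-a_0b_0|\xi|^2$; the identity $\varepsilon(\xi)\iota(\xi)=\tfrac{1}{2}\bigl(|\xi|^2-\widehat c(\xi)c(\xi)\bigr)$, from which follow $\bigl(\varepsilon(\xi)\iota(\xi)\bigr)^2=|\xi|^2\,\varepsilon(\xi)\iota(\xi)$, the product $\widetilde c(\xi)\widehat c(\xi)=(a_0+b_0)\varepsilon(\xi)\iota(\xi)-b_0|\xi|^2$, and the commutator $\varepsilon(\xi)\iota(\xi)\,c(X)=c(X)\,\varepsilon(\xi)\iota(\xi)+\xi(X)\,\widehat c(\xi)$; and the ``hatted'' companions of Lemma~3.7, in particular $\mathrm{Tr}\bigl(\widetilde c(u)\widetilde c(v)\widetilde c(w)\widehat c(Y)\bigr)$, $\mathrm{Tr}\bigl(\widetilde c(u)\widetilde c(v)\widetilde c(w)\widehat c(\xi)c(\xi)c(X)\bigr)$, and the four--fold trace $\mathrm{Tr}\bigl(\widetilde c(u)\widetilde c(v)\widetilde c(w)\widetilde c(\xi)\bigr)=a_0^2b_0^2\bigl[g(u,v)g(w,\xi)-g(u,w)g(v,\xi)+g(u,\xi)g(v,w)\bigr]\mathrm{Tr}(\mathrm{Id})$. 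Each of these is obtained just as Lemma~3.7 is obtained from Lemma~3.6 and \eqref{ali1}: push the extra factor through the three $\widetilde c$'s, then use $\mathrm{Tr}(AB)=\mathrm{Tr}(BA)$. With these in hand, identity (1) follows by combining the scalar part $-(a_0+b_0)\xi(X)\,\mathrm{Tr}\bigl(\widetilde c(u)\widetilde c(v)\widetilde c(w)\widetilde c(\xi)\bigr)$ with the $\widehat c$-part, the latter reduced via $\widetilde c(\xi)\widehat c(\xi)=(a_0+b_0)\varepsilon(\xi)\iota(\xi)-b_0|\xi|^2$ to traces already covered by Lemma~3.7.

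For (2) one uses that its bracket equals $\bigl(\bar c(\xi)c(X)+c(X)\widetilde c(\xi)\bigr)\varepsilon(\xi)\iota(\xi)$, so the same linearisation applies with an extra $\varepsilon(\xi)\iota(\xi)$ on the right; the new traces $\mathrm{Tr}\bigl(\widetilde c(u)\widetilde c(v)\widetilde c(w)\widetilde c(\xi)\,\varepsilon(\xi)\iota(\xi)\bigr)$ and $\mathrm{Tr}\bigl(\widetilde c(u)\widetilde c(v)\widetilde c(w)\widetilde c(\xi)\,\widehat c(\xi)c(X)\,\varepsilon(\xi)\iota(\xi)\bigr)$ are again treated by $\varepsilon(\xi)\iota(\xi)=\tfrac{1}{2}(|\xi|^2-\widehat c(\xi)c(\xi))$ and Lemma~3.7. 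Identity (3) is the same computation with $\varepsilon(\xi)\iota(\xi)$ on the left, where one first moves it past $c(X)$ by the commutator recorded above; and (4), with $\varepsilon(\xi)\iota(\xi)$ on both sides, is reduced to (2)/(3)-type expressions by collapsing $\varepsilon(\xi)\iota(\xi)\cdots\varepsilon(\xi)\iota(\xi)$ with $\bigl(\varepsilon(\xi)\iota(\xi)\bigr)^2=|\xi|^2\,\varepsilon(\xi)\iota(\xi)$ whenever the two copies become adjacent. The main difficulty is combinatorial rather than conceptual: one has to keep track of the numerous sign changes produced when the explicit factors $\widehat c(\xi)$, $c(X)$ and $\varepsilon(\xi)\iota(\xi)$ are transported through $\widetilde c(u)\widetilde c(v)\widetilde c(w)\widetilde c(\xi)$, and then carefully combine the many resulting scalar monomials; it is precisely at this last stage that the cancellations occur which leave only the compact right--hand sides displayed in the statement.
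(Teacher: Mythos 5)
Your algebraic preliminaries are all correct, and they are in fact the same as the paper's: your compact identity $\bar c(\xi)c(X)+c(X)\widetilde c(\xi)=-(a_0+b_0)\xi(X)-(a_0-b_0)\widehat c(\xi)c(X)$ is precisely the paper's (3.34) rewritten via $\widehat c(\xi)c(X)=-c(X)\widehat c(\xi)$, and the auxiliary relations ($\widetilde c(\xi)^2=-a_0b_0|\xi|^2$, $\widetilde c(\xi)\widehat c(\xi)=(a_0+b_0)\epsilon(\xi)\iota(\xi)-b_0|\xi|^2$, $\epsilon(\xi)\iota(\xi)c(X)=c(X)\epsilon(\xi)\iota(\xi)+\xi(X)\widehat c(\xi)$, $(\epsilon(\xi)\iota(\xi))^2=|\xi|^2\epsilon(\xi)\iota(\xi)$) all check out. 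So up to the linearisation, your route and the paper's route coincide.

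The genuine gap is the last, unverified step: you assert that, for part (1), the scalar piece plus the ``$\widehat c$-piece'' combine to give the stated right-hand side, which would require the $\widehat c$-piece to vanish. It does not. Carrying out exactly your own reduction scheme one finds
\begin{equation*}
-(a_0-b_0)\,\mathrm{Tr}\bigl(\widetilde c(u)\widetilde c(v)\widetilde c(w)\widetilde c(\xi)\widehat c(\xi)c(X)\bigr)
= -\tfrac{1}{4}\,a_0 b_0\,(a_0^2-b_0^2)(a_0-b_0)\bigl[\,|\xi|^2 R-\xi(X)S\,\bigr]\mathrm{Tr}(\mathrm{Id}),
\end{equation*}
with $R=g(u,X)g(v,w)-g(v,X)g(u,w)+g(w,X)g(u,v)$ and $S=\xi(u)g(v,w)-\xi(v)g(u,w)+\xi(w)g(u,v)$, which is nonzero whenever $a_0\neq b_0$. (You can confirm this numerically: in $\Lambda^*\mathbb R^2$ with $a_0=2$, $b_0=1$, $u=X=e_2$, $v=w=\xi=e_1$ one has $\xi(X)=S=0$ and $R=|\xi|^2=1$, and a direct matrix computation of the left-hand side of (1) gives trace $-6$, while the claimed right-hand side is $0$.) The same discrepancy propagates to parts (2)--(4). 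The paper's own derivation contains the same unverified leap: the passage in (3.35) from the three-term sum to the single scalar term silently assumes the two traces carrying $\epsilon(\xi)$ and $\iota(\xi)$ cancel, which they do not; and the paper's (3.29) is moreover internally inconsistent with (3.39) by a factor of~$2$. So the gap in your proposal is real, but it is the very gap in the paper: before using these identities you must actually carry out the final reduction, and when you do, you will find additional terms proportional to $(a_0-b_0)^2$ which show the stated formulas cannot hold as written for $a_0\neq b_0$.
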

\begin{proof}
For $\bar{c}(\xi)=b_0\epsilon(\xi)-a_0\iota(\xi), \widetilde{c}(\xi)= a_0\epsilon(\xi)-b_0\iota(\xi)$, we get
 \begin{align}
\bar{c}(\xi)c(X)+c(X)\widetilde{c}(\xi)=(a_0-b_0)\big( \epsilon (X)\epsilon (\xi)-\iota(X)\iota(\xi)+\epsilon (X)\iota(\xi)-\iota(X) \epsilon (\xi) \big)
-(a_0+b_0) \xi(X).
\end{align}

(1).  By Lemma (3.6) and the relation of the Clifford action and $ {\rm{Tr}}(AB)= {\rm{Tr}}(BA) $, then
   \begin{align}
&{\rm{Tr}}\Big(\widetilde{c}(u) \widetilde{c}(v) \widetilde{c}(w)\widetilde{c}(\xi)\big( \bar{c}(\xi)c(X)+c(X)\widetilde{c}(\xi) \big)\Big)\nonumber\\
 =&  {\rm{Tr}}\Big(\widetilde{c}(u) \widetilde{c}(v) \widetilde{c}(w)\widetilde{c}(\xi) \big[(a_0-b_0)\big( \epsilon (X)\epsilon (\xi)-\iota(X)\iota(\xi)+\epsilon (X)\iota(\xi)-\iota(X) \epsilon (\xi) \big)
-(a_0+b_0) \xi(X)\big]\Big)\nonumber\\
 =&  (a_0-b_0){\rm{Tr}}\Big(\widetilde{c}(u) \widetilde{c}(v) \widetilde{c}(w)\widetilde{c}(\xi)c(X) \epsilon (\xi)\Big)
 + (a_0-b_0){\rm{Tr}}\Big(\widetilde{c}(u) \widetilde{c}(v) \widetilde{c}(w)\widetilde{c}(\xi) c(X) \iota(\xi)\Big)\nonumber\\
 &-(a_0+b_0) \xi(X){\rm{Tr}}\Big(\widetilde{c}(u) \widetilde{c}(v) \widetilde{c}(w)\widetilde{c}(\xi)\Big)\nonumber\\
 =& -(a_0+b_0) \xi(X)a_{0}^{2}b_{0}^{2}\big[\xi(u)g(v,w)-\xi(v)g(u,w)+\xi(w)g(u,v)\big]{\rm{Tr}}(\rm{Id}).
\end{align}

(2). We note that
  \begin{align}
\big(\bar{c}(\xi)c(X)+c(X)\widetilde{c}(\xi)\big) \epsilon (\xi)\iota(\xi)=(a_0-b_0) |\xi|^{2}c(X)\iota(\xi)-(a_0+b_0)  \xi(X)\epsilon (\xi)\iota(\xi).
\end{align}
Then
   \begin{align}
&{\rm{Tr}}\Big(\widetilde{c}(u) \widetilde{c}(v) \widetilde{c}(w)\widetilde{c}(\xi)\big( \bar{c}(\xi)c(X) \epsilon (\xi)\iota(\xi)+c(X)\widetilde{c}(\xi) \epsilon (\xi)\iota(\xi) \big)\Big)\nonumber\\
=& (a_0-b_0) |\xi|^{2}{\rm{Tr}}\Big(\widetilde{c}(u) \widetilde{c}(v) \widetilde{c}(w)\widetilde{c}(\xi)c(X)\iota(\xi)\Big)
-(a_0+b_0)  \xi(X) {\rm{Tr}}\Big(\widetilde{c}(u) \widetilde{c}(v) \widetilde{c}(w)\widetilde{c}(\xi)\epsilon (\xi)\iota(\xi)\Big)
\end{align}
By Lemma (3.6) and the relation of the Clifford action and $ {\rm{Tr}}(AB)= {\rm{Tr}}(BA) $,
\begin{align}
  &{\rm{Tr}}\big(\widetilde{c}(u) \widetilde{c}(v) \widetilde{c}(w)\widetilde{c}(\xi)c(X)\iota(\xi) \big)\nonumber\\
 =&{\rm{Tr}}\big(\widetilde{c}(u) \widetilde{c}(v)  \widetilde{c}(w)
    (-c(X)\widetilde{c}(\xi)-(a_{0}+b_{0})\xi(X)) \iota(\xi) \big)\nonumber\\
  =&-{\rm{Tr}}\big(\widetilde{c}(u) \widetilde{c}(v) \widetilde{c}(w)c(X)\widetilde{c}(\xi)\iota(\xi) \big)
  -(a_{0}+b_{0})\xi(X) {\rm{Tr}}\big(\widetilde{c}(u) \widetilde{c}(v)  \widetilde{c}(w) \iota(\xi) \big)\nonumber\\
=& \cdots \nonumber\\
   =&-{\rm{Tr}}\big(\widetilde{c}(u) \widetilde{c}(v) \widetilde{c}(w)\widetilde{c}(\xi)c(X)\iota(\xi) \big)
 +\xi(X){\rm{Tr}}\big(\widetilde{c}(u) \widetilde{c}(v)\widetilde{c}(w) \widetilde{c}(\xi) \big)\nonumber\\
 &+(a_{0}+b_{0})g(X,u){\rm{Tr}}\big(\widetilde{c}(v) \widetilde{c}(w)\widetilde{c}(\xi)\iota(\xi) \big)
 -(a_{0}+b_{0})g(X,v){\rm{Tr}}\big(\widetilde{c}(u) \widetilde{c}(w) \widetilde{c}(\xi)\iota(\xi) \big) \nonumber\\
 &+(a_{0}+b_{0})g(X,w){\rm{Tr}}\big(\widetilde{c}(u) \widetilde{c}(v)\widetilde{c}(\xi)\iota(\xi) \big)
  -(a_{0}+b_{0})\xi(X){\rm{Tr}}\big(\widetilde{c}(u)  \widetilde{c}(v)\widetilde{c}(w) \iota(\xi) \big).
 \end{align}

Similarly, we obtain
\begin{align}
 &{\rm{Tr}}\big(\widetilde{c}(u) \widetilde{c}(v)\widetilde{c}(w) \widetilde{c}(\xi) \big)
 =a_{0}^{2}b_{0}^{2}\big[\xi(u)g(v,w)-\xi(v)g(u,w)+\xi(w)g(u,v)\big]{\rm{Tr}}(\rm{Id});\nonumber\\
& {\rm{Tr}}\big(\widetilde{c}(v) \widetilde{c}(w)\widetilde{c}(\xi)\iota(\xi) \big)
=-\frac{a_{0}^{2} b_{0}}{2}|\xi|^{2}g(v,w){\rm{Tr}}(\rm{Id}) ;\nonumber\\
& {\rm{Tr}}\big(\widetilde{c}(u) \widetilde{c}(w) \widetilde{c}(\xi)\iota(\xi) \big)
 =-\frac{a_{0}^{2} b_{0}}{2}|\xi|^{2}g(u,w){\rm{Tr}}(\rm{Id});\nonumber\\
 &{\rm{Tr}}\big(\widetilde{c}(u) \widetilde{c}(v)\widetilde{c}(\xi)\iota(\xi) \big)
 =-\frac{a_{0}^{2} b_{0}}{2}|\xi|^{2}g(u,v){\rm{Tr}}(\rm{Id});\nonumber\\
 & {\rm{Tr}}\big(\widetilde{c}(u)  \widetilde{c}(v)\widetilde{c}(w) \iota(\xi) \big)
 =\frac{ -a_{0}^{2}b_{0}}{2}\big[\xi(u)g(v,w)-\xi(v)g(u,w)+\xi(w)g(u,v)\big]{\rm{Tr}}(\rm{Id}).
\end{align}
By (3.38) and (3.39), we obtain
\begin{align}
  &{\rm{Tr}}\big(\widetilde{c}(u) \widetilde{c}(v) \widetilde{c}(w)\widetilde{c}(\xi)c(X)\iota(\xi) \big)\nonumber\\
   =&\frac{a_{0}^{2}b_{0}(a_{0}+3b_{0})}{4}\xi(X)\big[\xi(u)g(v,w)-\xi(v)g(u,w)+\xi(w)g(u,v)\big]{\rm{Tr}}(\rm{Id})\nonumber\\
  &-\frac{a_{0}^{2}b_{0}(a_{0}+ b_{0})}{4}|\xi|^{2}\big[g(X,w)g(u,v)-g(X,v)g(u,w)+g(X,u)g(v,w)\big]{\rm{Tr}}(\rm{Id}).
 \end{align}
 Similarly, we obtain
 \begin{align}
{\rm{Tr}}\Big(\widetilde{c}(u) \widetilde{c}(v) \widetilde{c}(w)\widetilde{c}(\xi)\epsilon (\xi)\iota(\xi)\Big)
   =\frac{a_{0}^{2}b_{0}^{2} }{2}|\xi|^{2} \big[\xi(u)g(v,w)-\xi(v)g(u,w)+\xi(w)g(u,v)\big]{\rm{Tr}}(\rm{Id}) .
 \end{align}
 Combining equations (3.40) and (3.41) yields the second equation of this Lemma.

(3). It is obtained through direct calculation
  \begin{align}
\epsilon (\xi)\iota(\xi)\big(\bar{c}(\xi)c(X)+c(X)\widetilde{c}(\xi)\big)=(a_0-b_0)|\xi|^{2}\big(\xi(X)+  c(X)\epsilon (\xi) \big)-(a_0+b_0)  \xi(X)\epsilon (\xi)\iota(\xi).
\end{align}
Then
 \begin{align}
  &{\rm{Tr}}\Big(\widetilde{c}(u) \widetilde{c}(v) \widetilde{c}(w)\widetilde{c}(\xi)\big( \epsilon (\xi)\iota(\xi)\bar{c}(\xi)c(X)+ \epsilon (\xi)\iota(\xi)c(X)\widetilde{c}(\xi) \big)\Big)\nonumber\\
  =& {\rm{Tr}}\Big(\widetilde{c}(u) \widetilde{c}(v) \widetilde{c}(w)\widetilde{c}(\xi)
  \big( (a_0-b_0)|\xi|^{2}\big(\xi(X)+  c(X)\epsilon (\xi) \big)-(a_0+b_0)  \xi(X)\epsilon (\xi)\iota(\xi)
   \big)\Big)\nonumber\\
 =& (a_0-b_0)|\xi|^{2}\xi(X){\rm{Tr}}\Big(\widetilde{c}(u) \widetilde{c}(v) \widetilde{c}(w)\widetilde{c}(\xi)
  \Big)
  +(a_0-b_0)|\xi|^{2}  {\rm{Tr}}\Big(\widetilde{c}(u) \widetilde{c}(v) \widetilde{c}(w)\widetilde{c}(\xi)
 c(X)\epsilon (\xi) \Big)\nonumber\\
   &  -(a_0+b_0) \xi(X)  {\rm{Tr}}\Big(\widetilde{c}(u) \widetilde{c}(v) \widetilde{c}(w)\widetilde{c}(\xi)
\epsilon (\xi)\iota(\xi)
 \Big)\nonumber\\
=&-a_{0}^{2}b_{0}^{3}|\xi|^{2}\xi(X)[\xi(u)g(v,w)-\xi(v)g(u,w)+\xi(w)g(u,v)]{\rm{Tr}}(\rm{Id}).
\end{align}

(4). We note that
  \begin{align}
\epsilon (\xi)\iota(\xi)\big(\bar{c}(\xi)c(X)+c(X)\widetilde{c}(\xi)\big) \epsilon (\xi)\iota(\xi)= -2b_0  |\xi|^{2}\xi(X)\epsilon (\xi)\iota(\xi).
\end{align}
 Then
 \begin{align}
 &{\rm{Tr}}\Big(\widetilde{c}(u) \widetilde{c}(v) \widetilde{c}(w)\widetilde{c}(\xi)\big( \epsilon (\xi)\iota(\xi)\bar{c}(\xi)c(X)\epsilon (\xi)\iota(\xi)+ \epsilon (\xi)\iota(\xi)c(X)\widetilde{c}(\xi)\epsilon (\xi)\iota(\xi) \big)\Big)\nonumber\\
 = &{\rm{Tr}}\Big(\widetilde{c}(u) \widetilde{c}(v) \widetilde{c}(w)\widetilde{c}(\xi)\big( -2b_0  |\xi|^{2}\xi(X)\epsilon (\xi)\iota(\xi)\big)\Big)\nonumber\\
  = &-2b_0  |\xi|^{2}\xi(X){\rm{Tr}}\Big(\widetilde{c}(u) \widetilde{c}(v) \widetilde{c}(w)\widetilde{c}(\xi) \epsilon (\xi)\iota(\xi) \Big)\nonumber\\
 =&-a_{0}^{2}b_{0}^{3}|\xi|^{4}\xi(X)\big[\xi(u)g(v,w)-\xi(v)g(u,w)+\xi(w)g(u,v)\big]{\rm{Tr}}(\rm{Id}) .
\end{align}
 \end{proof}
By integrating formula we get
\begin{align}
& \int_{\|\xi\|=1} \operatorname{Tr} [H_{1}  (x_{0} )+H_{2}  (x_{0} ) ](x, \xi)  \sigma(\xi)  \nonumber\\
=&\int_{\|\xi\|=1} \operatorname{Tr} [\widetilde{c}(u) \widetilde{c}(v) \widetilde{c}(w)\sqrt{-1}c(X)
\frac{b_{0}^{4}|\xi|^{2}+(a_{0}^{4}-b_{0}^{4})\varepsilon(\xi)\iota(\xi)}{a_{0}^{4}b_{0}^{4}|\xi|^{6}}  (x_{0} ) ](x, \xi)  \sigma(\xi)  \nonumber\\
&+\int_{\|\xi\|=1} \operatorname{Tr} \Big[ -\sqrt{-1}\widetilde{c}(u) \widetilde{c}(v) \widetilde{c}(w)\widetilde{c}( \xi)
\sigma_{-2}(\widetilde{\Delta}^{-1})\sigma_{1}((\widetilde{D}^{*}\widetilde{D})^{-1})\sigma_{-4}(\widetilde{\Delta}^{-2})  (x_{0} ) \Big](x, \xi)  \sigma(\xi)  \nonumber\\
&+\int_{\|\xi\|=1} \operatorname{Tr} \Big[  -\sqrt{-1}\widetilde{c}(u) \widetilde{c}(v) \widetilde{c}(w)
\widetilde{c}(\xi)\sigma_{-4}(\widetilde{\Delta}^{-2})\sigma_{1}((\widetilde{D}^{*}\widetilde{D})^{-1})\sigma_{-2}(\widetilde{\Delta}^{-1}) (x_{0} ) \Big](x, \xi)  \sigma(\xi)  \nonumber\\
=&\frac{3\sqrt{-1}(a_{0}^{4}-b_{0}^{4})(2b_{0}^{2}-3a_{0}^{2}-a_{0}b_{0})}{16a_{0}^{4}b_{0}^{3}}
\Big[g(u,X)g(v,w)-g(v,X)g(u,w)+g(w,X)g(u,v)\Big]{\rm{Tr}}(\rm{Id}).
 \end{align}
Then we obtain the spectral torsion associated with nonminimal de Rham-Hodge operators.
 \begin{thm}\label{mrthm}
 	Let $M$ be an four-dimensional oriented compact spin Riemannian manifold,  with the trilinear Clifford multiplication by functional of differential one-forms $\widetilde{c}(u),\widetilde{c}(v),\widetilde{c}(w),$ the spectral torsion $\mathscr{T}_{\widetilde{D}}$ for  nonminimal de Rham-Hodge operators  $\widetilde{D}=a_0\text{d}+b_0\delta+\sqrt{-1}c(X),  \widetilde{D}^{*}=b_0\text{d}+a_0\delta+\sqrt{-1}c(X)$  equals to
\begin{align}
 	&\mathscr{T}_{\widetilde{D}}\big(\widetilde{c}(u),\widetilde{c}(v),\widetilde{c}(w)\big)
= \mathrm{Wres}\big(\widetilde{c}(u) \widetilde{c}(v) \widetilde{c}(w) \widetilde{D}(\widetilde{D}^{*}\widetilde{D})^{-2}\big) \nonumber\\
=& \int_{M}\bigg\{
\frac{12\sqrt{-1}(a_{0}^{4}-b_{0}^{4})(2b_{0}^{2}-3a_{0}^{2}-a_{0}b_{0})}{16a_{0}^{4}b_{0}^{3}}
\Big[g(u,X)g(v,w)-g(v,X)g(u,w)+g(w,X)g(u,v)\Big] {\rm Vol}(S^{3})
\bigg\}{\rm d Vol}_M.
\end{align}
 \end{thm}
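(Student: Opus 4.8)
The plan is to reduce the assertion to a pointwise symbol computation via Definition~3.2, which identifies
\[
\mathscr{T}_{\widetilde{D}}\big(\widetilde{c}(u),\widetilde{c}(v),\widetilde{c}(w)\big)=\int_{M}\int_{\|\xi\|=1}\operatorname{Tr}\Big[\sigma_{-4}\big(\widetilde{c}(u)\widetilde{c}(v)\widetilde{c}(w)\widetilde{D}(\widetilde{D}^{*}\widetilde{D})^{-2}\big)\Big](x,\xi)\,\sigma(\xi)\,\mathrm{d}x,
\]
and then to evaluate the inner $\xi$-integral at an arbitrarily fixed point $x_{0}\in M$ in normal coordinates centred at $x_{0}$; since the right-hand side is a tensorial density in $(u,v,w)$, its value at $x_{0}$ determines it. So the whole proof amounts to: compute the order $-4$ complete symbol of $\widetilde{c}(u)\widetilde{c}(v)\widetilde{c}(w)\widetilde{D}(\widetilde{D}^{*}\widetilde{D})^{-2}$ at $x_{0}$, take the fibrewise trace, and integrate over the unit cosphere.

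First I would expand $\sigma_{-4}$ by the symbol product formula exactly as in the displayed identity above Lemma~3.3. Writing $\sigma(\widetilde{D})=\widetilde{c}(\xi)+\sigma_{0}(\widetilde{D})$ with $\sigma_{0}(\widetilde{D})$ carrying the zeroth-order term $\sqrt{-1}c(X)$, the order $-4$ piece is the sum of $\sigma_{0}(\widetilde{D})\,\sigma_{-4}\big((\widetilde{D}^{*}\widetilde{D})^{-2}\big)$, $\widetilde{c}(\xi)\,\sigma_{-5}\big((\widetilde{D}^{*}\widetilde{D})^{-2}\big)$ and $-\sqrt{-1}\sum_{j}\partial_{\xi_{j}}\widetilde{c}(\xi)\,\partial_{x_{j}}\sigma_{-4}\big((\widetilde{D}^{*}\widetilde{D})^{-2}\big)$, all left-multiplied by $\widetilde{c}(u)\widetilde{c}(v)\widetilde{c}(w)$. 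To get the symbols of $(\widetilde{D}^{*}\widetilde{D})^{-1}$ and $(\widetilde{D}^{*}\widetilde{D})^{-2}$ I would use $\widetilde{D}^{*}\widetilde{D}=\widetilde{\Delta}+(b_{0}\mathrm{d}+a_{0}\delta)\sqrt{-1}c(X)+\sqrt{-1}c(X)(a_{0}\mathrm{d}+b_{0}\delta)+(\sqrt{-1}c(X))^{2}$, so that these symbols coincide with those of $\widetilde{\Delta}^{-1},\widetilde{\Delta}^{-2}$ up to explicit first- and zeroth-order corrections carrying the factor $c(X)$; Lemmas~3.3, 3.4 and 3.5 supply every symbol that is needed, in particular $\sigma_{-2}(\widetilde{\Delta}^{-1})$, $\sigma_{-4}(\widetilde{\Delta}^{-2})$, $\sigma_{-5}\big((\widetilde{D}^{*}\widetilde{D})^{-2}\big)$ and the first-order symbol $\sigma_{1}\big((\widetilde{D}^{*}\widetilde{D})^{-1}\big)=-\big(\bar{c}(\xi)c(X)+c(X)\widetilde{c}(\xi)\big)$. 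In normal coordinates at $x_{0}$ one has $\partial_{x_{j}}\sigma_{-2}(\widetilde{\Delta}^{-1})(x_{0})=0$ and $\sigma_{-3}(\widetilde{\Delta}^{-1})(x_{0})=0$, which kills the third contribution, simplifies $\sigma_{-5}$, and leaves $\sigma_{-4}(\cdots)(x_{0})=H_{1}(x_{0})+H_{2}(x_{0})$, with $H_{1}$ from the $\sqrt{-1}c(X)$ part of $\widetilde{D}$ times $\sigma_{-4}(\widetilde{\Delta}^{-2})$ and $H_{2}$ from the $\widetilde{c}(\xi)$ part times $\sigma_{-5}\big((\widetilde{D}^{*}\widetilde{D})^{-2}\big)$, as already displayed.

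Next I would evaluate the two fibrewise traces. Using the deformed Clifford relations of Lemma~3.6 and cyclicity of the trace, each of the factors $c(X)$, $\widetilde{c}(\xi)$ and the exterior/interior operators $\varepsilon(\xi)\iota(\xi)$ occurring in $\sigma_{-2}(\widetilde{\Delta}^{-1})$, $\sigma_{-4}(\widetilde{\Delta}^{-2})$ is pushed cyclically past $\widetilde{c}(u)\widetilde{c}(v)\widetilde{c}(w)$; this is precisely what Lemmas~3.7 and 3.8 carry out, the output being that every trace which arises is a multiple of $\operatorname{Tr}(\mathrm{Id})$ times one of
\[
g(u,X)g(v,w)-g(v,X)g(u,w)+g(w,X)g(u,v),\qquad \xi(X)\big[\xi(u)g(v,w)-\xi(v)g(u,w)+\xi(w)g(u,v)\big]
\]
(or $|\xi|$-weighted versions thereof). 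In particular Lemma~3.8(1)--(4) handles exactly the four trace shapes $\widetilde{c}(\xi)(\bar{c}(\xi)c(X)+c(X)\widetilde{c}(\xi))$, $\widetilde{c}(\xi)(\cdots)\varepsilon(\xi)\iota(\xi)$, $\widetilde{c}(\xi)\varepsilon(\xi)\iota(\xi)(\cdots)$ and $\widetilde{c}(\xi)\varepsilon(\xi)\iota(\xi)(\cdots)\varepsilon(\xi)\iota(\xi)$ produced by the product $\sigma_{-2}(\widetilde{\Delta}^{-1})\,\sigma_{1}\big((\widetilde{D}^{*}\widetilde{D})^{-1}\big)\,\sigma_{-4}(\widetilde{\Delta}^{-2})$ and its reversed ordering. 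Substituting the explicit rational expressions for $\sigma_{-2}(\widetilde{\Delta}^{-1})$ and $\sigma_{-4}(\widetilde{\Delta}^{-2})$ then reduces $\operatorname{Tr}[H_{1}(x_{0})+H_{2}(x_{0})]$ to a finite sum of scalars in $a_{0},b_{0}$ times $|\xi|^{-6}$ times a monomial in $\xi$ which, restricted to $\|\xi\|=1$, is a polynomial of even degree multiplying one of the two tensors above.

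Finally I would integrate over $\|\xi\|=1$: odd-degree monomials vanish, and the even ones are handled by the standard moments $\int_{\|\xi\|=1}\xi_{i}\xi_{j}\,\sigma(\xi)=\tfrac14\mathrm{Vol}(S^{3})\,\delta_{ij}$ and $\int_{\|\xi\|=1}\xi_{i}\xi_{j}\xi_{k}\xi_{\ell}\,\sigma(\xi)=\tfrac1{24}\mathrm{Vol}(S^{3})(\delta_{ij}\delta_{k\ell}+\delta_{ik}\delta_{j\ell}+\delta_{i\ell}\delta_{jk})$, which convert $\xi(X)\xi(u)$ into $g(X,u)$, etc., so that the tensor $\xi(X)[\xi(u)g(v,w)-\cdots]$ also collapses onto $g(u,X)g(v,w)-g(v,X)g(u,w)+g(w,X)g(u,v)$. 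Collecting all coefficients, keeping the factor $\operatorname{Tr}(\mathrm{Id})$ on $\Lambda^{*}(T^{*}M)$ and the sphere volume $\mathrm{Vol}(S^{3})$, yields the integrated density displayed just before the statement, and then $\mathscr{T}_{\widetilde{D}}=\int_{M}(\cdots)\,\mathrm{d}x$ with $\sigma(\xi)\,\mathrm{d}x$ the Riemannian volume at $x_{0}$ gives the claimed formula. I expect the main difficulty to be the symbol calculus rather than the sphere integral: since $\sigma_{2}(\widetilde{\Delta})=(b_{0}^{2}-a_{0}^{2})\varepsilon(\xi)\iota(\xi)+a_{0}^{2}|\xi|^{2}$ is \emph{not} scalar, one must invert it by hand --- using $|\xi|^{2}=\varepsilon(\xi)\iota(\xi)+\iota(\xi)\varepsilon(\xi)$ and $(\varepsilon(\xi)\iota(\xi))^{2}=|\xi|^{2}\varepsilon(\xi)\iota(\xi)$ as in the proof of Lemma~3.3 --- after which the resolvent symbols $\sigma_{-2}(\widetilde{\Delta}^{-1})$, $\sigma_{-4}(\widetilde{\Delta}^{-2})$ and $\sigma_{-5}\big((\widetilde{D}^{*}\widetilde{D})^{-2}\big)$ all carry the endomorphism $\varepsilon(\xi)\iota(\xi)$, which must be reordered against $c(X)$ and $\widetilde{c}(\xi)$ before any trace is taken; this non-commutativity is what makes Lemmas~3.7 and 3.8 bulky and is the step where a sign or weight slip would most easily propagate into the final constant.
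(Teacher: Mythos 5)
Your proposal follows the paper's own proof essentially line for line: the same symbol expansion of $\sigma_{-4}$ via the composition formula, the same reduction $\widetilde{D}^{*}\widetilde{D}=\widetilde{\Delta}+(\text{lower order})$ feeding Lemmas 3.3--3.5, the same normal-coordinate simplifications killing $\sigma_{-3}(\widetilde{\Delta}^{-1})(x_0)$ and $\partial_{x_j}\sigma_{-2}(\widetilde{\Delta}^{-1})(x_0)$, the same $H_1+H_2$ split, the same appeal to Lemmas 3.6--3.8 for the fibrewise traces, and the same sphere-moment integration at the end. This is a correct reconstruction of the argument with no genuine deviation in strategy.
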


 \section{The spectral one form for  nonminimal de-Rham Hodge operators}

The purpose of this section is to specify the de-Rham Hodge type operator and compute the spectral form for  nonminimal de-Rham Hodge operators.
To avoid technique terminology we only state our results for
compact oriented Riemannian manifold of even dimension $n=2m$ with a Riemannian metric $g^{M}$ by using the trace of  nonminimal de-Rham Hodge operators and the noncommutative residue density.  By  Definition 2.1 in \cite{DL2} and Definition 2.2 in \cite{WW3}, we have
\begin{defn}
For the  nonminimal  de-Rham Hodge type operator $\widetilde{D}=a_0\text{d}+b_0\delta+\sqrt{-1}c(X)$ and
 the adjoint operator $\widetilde{D}^{*}=b_0\text{d}+a_0\delta+\sqrt{-1}c(X)$, the trilinear Clifford multiplication by functional of differential one-forms $\widehat{c}(u)$
\begin{align}
\mathscr{T}_{1}(\widetilde{c}(u)  )
=\mathrm{Wres}\big(\widetilde{c}(u)  \widetilde{D}(\widetilde{D}^{*}\widetilde{D})^{-m}\big)
=\int_{M}\int_{\|\xi\|=1} \operatorname{Tr}\left[\sigma_{-2 m}
\Big( \widetilde{c}(u)  \widetilde{D}(\widetilde{D}^{*}\widetilde{D})^{-m}\Big)
\right](x, \xi)\sigma(\xi)\text{d}x.
 \end{align}
is called thespectral one form $\mathscr{T}_{1}$.
\end{defn}

For $n=4$, from Lemma 3.3, Lemma 3.4 and Lemma 3.5, we obtain
\begin{align}\label{ABD}
& \sigma_{-4}
\Big( \widetilde{c}(u)  \widetilde{D}(\widetilde{D}^{*}\widetilde{D})^{-2}\Big)(x_0) \nonumber\\
 =&\widetilde{c}(u)  \Big(
 \sigma_{0}(a_0\text{d}+b_0\delta+\sqrt{-1}c(X)) \sigma_{-4}((\widetilde{D}^{*}\widetilde{D})^{-2})
 +  \sigma_{1}(a_0\text{d}+b_0\delta+\sqrt{-1}c(X)) \sigma_{-5}((\widetilde{D}^{*}\widetilde{D})^{-2}) \nonumber\\
&-\sqrt{-1}\sum_{j=1}^{4}  \partial_{\xi_{j}} (\sigma_{1}(\widetilde{D})) \partial_{x_{j}}(\sigma_{-4}((\widetilde{D}^{*}\widetilde{D})^{-2}))\Big)(x_0)\nonumber\\
 =&\widetilde{c}(u)  \sqrt{-1}c(X)\sigma_{-4}(\Delta^{-2})(x_0)
 +\sqrt{-1}\widetilde{c}(u) \widetilde{c}( \xi)\sigma_{-5}((\widetilde{D}^{*}\widetilde{D})^{-2})(x_0)\nonumber\\
 =&L_{1}(x_0)+L_{2}(x_0),
\end{align}
where
\begin{align}\label{ABD}
L_{1}(x_0)= \widetilde{c}(u) \sqrt{-1}c(X)\sigma_{-4}(\Delta^{-2})(x_0)
= \widetilde{c}(u)  \sqrt{-1}c(X)
\frac{b_{0}^{4}|\xi|^{2}+(a_{0}^{4}-b_{0}^{4})\varepsilon(\xi)\iota(\xi)}{a_{0}^{4}b_{0}^{4}|\xi|^{6}};
\end{align}
and
\begin{align}\label{ABD}
L_{2}(x_0)=&\sqrt{-1}\widetilde{c}(u)  \widetilde{c}( \xi)\sigma_{-5}((\widetilde{D}^{*}\widetilde{D})^{-2})(x_0)\nonumber\\
=&\sqrt{-1}\widetilde{c}(u)  \widetilde{c}( \xi)
\Big(-\sigma_{-2}(\widetilde{\Delta}^{-1})\sigma_{1}((\widetilde{D}^{*}\widetilde{D})^{-1})\sigma_{-4}(\widetilde{\Delta}^{-2})\nonumber\\
&-\sigma_{-4}(\widetilde{\Delta}^{-2})\sigma_{1}((\widetilde{D}^{*}\widetilde{D})^{-1})\sigma_{-2}(\widetilde{\Delta}^{-1})\Big)(x_0)\nonumber\\
=& -\sqrt{-1}\widetilde{c}(u)  \widetilde{c}( \xi)
\sigma_{-2}(\widetilde{\Delta}^{-1})\sigma_{1}((\widetilde{D}^{*}\widetilde{D})^{-1})\sigma_{-4}(\widetilde{\Delta}^{-2})(x_0)\nonumber\\
&-\sqrt{-1}\widetilde{c}(u)
\widetilde{c}(\xi)\sigma_{-4}(\widetilde{\Delta}^{-2})\sigma_{1}((\widetilde{D}^{*}\widetilde{D})^{-1})\sigma_{-2}(\widetilde{\Delta}^{-1}) (x_0),
\end{align}
where
\begin{align}
\sigma_{1}((\widetilde{D}^{*}\widetilde{D})^{-1})(x_0)
=&\sqrt{-1}(b_0\text{d}+a_0\delta)\sqrt{-1}c(X)+(\sqrt{-1})^{2}c(X)(a_0\text{d}+b_0\delta)(x_0)\nonumber\\
=&\sqrt{-1}\bar{c}(\xi)\sqrt{-1}c(X)+(\sqrt{-1})^{2}c(X)\widetilde{c}(\xi).
\end{align}

The following two lemmas play an important role in our calculation of
the spectral form  associated with the nonminimal  de-Rham Hodge type operator.

 \begin{lem}
The following identities hold:
 \begin{align}
 &{\rm{Tr}}\big(\widetilde{c}(u) c(X)\big)
 = \frac{ -(a_{0}+b_{0})}{2}g(u,X){\rm{Tr}}(\rm{Id}) ;\\
 &{\rm{Tr}}\big(\widetilde{c}(u)  c(X)  \epsilon (\xi)\iota(\xi) \big)
= \frac{1}{4}\big( -a_{0}\xi(X)\xi(u)+b_{0}\xi(X)\xi(u)-(a_{0}+b_{0})|\xi|^{2} g(u,X)\big){\rm{Tr}}(\rm{Id}).
\end{align}
\end{lem}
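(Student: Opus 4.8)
The plan is to establish both identities purely algebraically, from the anticommutation relations recorded in Lemma~3.6 together with the Clifford relations $c(\xi)^2=-|\xi|^2$, $\widehat c(\xi)^2=|\xi|^2$, $c(\xi)\widehat c(\xi)=-\widehat c(\xi)c(\xi)$, and the cyclicity of the trace; no computation in an explicit basis is needed.

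For the first identity I would start from $\widetilde c(u)c(X)=-c(X)\widetilde c(u)-(a_0+b_0)g(u,X)$, which is the first relation of Lemma~3.6. Taking ${\rm Tr}$ and using ${\rm Tr}(c(X)\widetilde c(u))={\rm Tr}(\widetilde c(u)c(X))$ gives $2\,{\rm Tr}(\widetilde c(u)c(X))=-(a_0+b_0)g(u,X){\rm Tr}({\rm Id})$, which is the asserted formula. Applying the same device to $c(X)c(\xi)+c(\xi)c(X)=-2\xi(X)$ yields the auxiliary identity ${\rm Tr}(c(X)c(\xi))=-\xi(X){\rm Tr}({\rm Id})$, which will be used below.

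For the second identity the key step is to replace the non-Clifford factor $\epsilon(\xi)\iota(\xi)$ by Clifford generators. From $\epsilon(\xi)=\tfrac12(\widehat c(\xi)+c(\xi))$, $\iota(\xi)=\tfrac12(\widehat c(\xi)-c(\xi))$ and the relations above one gets $\epsilon(\xi)\iota(\xi)=\tfrac12\bigl(|\xi|^2+c(\xi)\widehat c(\xi)\bigr)$, so that ${\rm Tr}(\widetilde c(u)c(X)\epsilon(\xi)\iota(\xi))=\tfrac{|\xi|^2}{2}{\rm Tr}(\widetilde c(u)c(X))+\tfrac12{\rm Tr}(\widetilde c(u)c(X)c(\xi)\widehat c(\xi))$; the first term is already evaluated. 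For the remaining term I would move $\widehat c(\xi)$ to the front: it anticommutes with $c(\xi)$ and with $c(X)$, while the third relation of Lemma~3.6 gives $\widetilde c(u)\widehat c(\xi)=-\widehat c(\xi)\widetilde c(u)+(a_0-b_0)\xi(u)$. Thus $\widetilde c(u)c(X)c(\xi)\widehat c(\xi)=-\widehat c(\xi)\widetilde c(u)c(X)c(\xi)+(a_0-b_0)\xi(u)\,c(X)c(\xi)$; taking the trace, using cyclicity on the first summand and the auxiliary identity on the second, gives ${\rm Tr}(\widetilde c(u)c(X)c(\xi)\widehat c(\xi))=-\tfrac12(a_0-b_0)\xi(u)\xi(X){\rm Tr}({\rm Id})$. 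Substituting back and expanding $(a_0-b_0)$ produces exactly $\tfrac14\bigl(-a_0\xi(X)\xi(u)+b_0\xi(X)\xi(u)-(a_0+b_0)|\xi|^2g(u,X)\bigr){\rm Tr}({\rm Id})$.

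The whole computation is routine; the only place requiring care is sign bookkeeping, since the three relations in Lemma~3.6 --- for $\widetilde c$ against $c$, against $\widetilde c$, and against $\widehat c$ --- carry different coefficients and signs, and one must keep $c(\xi)^2=-|\xi|^2$ distinct from $\widehat c(\xi)^2=+|\xi|^2$. There is no genuine obstacle.
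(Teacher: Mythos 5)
Your proof of the first identity is exactly the paper's: apply the $\widetilde c$--$c$ anticommutation relation from Lemma~3.6 and cyclicity of the trace.

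For the second identity you take a genuinely different route. The paper keeps the factor $\epsilon(\xi)\iota(\xi)$ as it is and anticommutes $c(X)$ around the whole word, which produces remainder terms $-\xi(X)\,{\rm Tr}(\widetilde c(u)\iota(\xi))$ and $-\xi(X)\,{\rm Tr}(\widetilde c(u)\epsilon(\xi))$ together with $-(a_0+b_0)g(u,X)\,{\rm Tr}(\epsilon(\xi)\iota(\xi))$; each of these three traces is then evaluated separately from the mixed $\epsilon$--$\iota$ and $\widetilde c$--$\epsilon$, $\widetilde c$--$\iota$ anticommutators. You instead first convert the non-Clifford factor via $\epsilon(\xi)\iota(\xi)=\tfrac12\bigl(|\xi|^2+c(\xi)\widehat c(\xi)\bigr)$, after which everything is phrased in terms of the Clifford elements $c$, $\widehat c$, $\widetilde c$ and only the three relations of Lemma~3.6 plus $c(\xi)^2=-|\xi|^2$, $\widehat c(\xi)^2=|\xi|^2$, $c\widehat c=-\widehat c\,c$ are needed; a single application of the trace-cyclicity trick on $\widetilde c(u)c(X)c(\xi)\widehat c(\xi)$ then finishes the job. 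Both computations arrive at the stated coefficients, and I checked that your intermediate values $\,{\rm Tr}(c(X)c(\xi))=-\xi(X)\,{\rm Tr}({\rm Id})$ and ${\rm Tr}(\widetilde c(u)c(X)c(\xi)\widehat c(\xi))=-\tfrac12(a_0-b_0)\xi(u)\xi(X)\,{\rm Tr}({\rm Id})$ are correct and that the final assembly reproduces $\tfrac14\bigl(-(a_0-b_0)\xi(u)\xi(X)-(a_0+b_0)|\xi|^2 g(u,X)\bigr){\rm Tr}({\rm Id})$. The advantage of your rewrite is that it collapses the bookkeeping to a single family of Clifford anticommutators rather than the heterogeneous $\epsilon$--$\iota$ relations; the paper's version is slightly more direct but requires keeping track of more distinct sign rules.
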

\begin{proof}
By Lemma (3.6), we have the equality:
\begin{align}
\widetilde{c}(u) c(X)+c(X)\widetilde{c}(u)=-(a_0+b_0)g^{TM}( u,X),
\end{align}
By the relation of the Clifford action and $ {\rm{Tr}}(AB)= {\rm{Tr}}(BA) $, we get
 \begin{align}
{\rm{Tr}}\big(\widetilde{c}(u) c(X)\big)
 = \frac{ -(a_{0}+b_{0})}{2}g(u,X){\rm{Tr}}(\rm{Id}).
\end{align}
Similarly, it can be directly calculated
\begin{align}
  &{\rm{Tr}}\big(\widetilde{c}(u)  c(X)  \epsilon (\xi)\iota(\xi) \big)\nonumber\\
 =&{\rm{Tr}}\big( (-c(X)\widetilde{c}(u)-(a_{0}+b_{0})g(u,X))  \epsilon (\xi)\iota(\xi) \big)\nonumber\\
  =&-{\rm{Tr}}\big(  c(X)\widetilde{c}(u)   \epsilon (\xi)\iota(\xi) \big)
   -(a_{0}+b_{0})g(u,X){\rm{Tr}}\big( \epsilon (\xi)\iota(\xi) \big)\nonumber\\
     =& \cdots \nonumber\\
 =&-{\rm{Tr}}\big(\widetilde{c}(u)c(X)  \epsilon (\xi)\iota(\xi) \big)
 -(a_{0}+b_{0})g(u,X){\rm{Tr}}\big( \epsilon (\xi)\iota(\xi) \big)
- \xi(X){\rm{Tr}}\big(\widetilde{c}(u)\iota(\xi) \big)- \xi(X){\rm{Tr}}\big(\widetilde{c}(u)\epsilon (\xi) \big)
\end{align}
By transferring the first term on the right side of (4.10) to the left, the second equation of this Lemma can be obtained.
\end{proof}

 \begin{lem}
The following identities hold:
 \begin{align}
 (1)~~&{\rm{Tr}}\Big(\widetilde{c}(u)  \widetilde{c}(\xi)\big( \bar{c}(\xi)c(X)+c(X)\widetilde{c}(\xi) \big)\Big)
 =  a_{0} b_{0} (a_{0}+b_{0})\xi(X) \xi(u) {\rm{Tr}}(\rm{Id});\\
 (2)~~ &{\rm{Tr}}\Big(\widetilde{c}(u)  \widetilde{c}(\xi)\big( \bar{c}(\xi)c(X) \epsilon (\xi)\iota(\xi)+c(X)\widetilde{c}(\xi) \epsilon (\xi)\iota(\xi) \big)\Big)\nonumber\\
 =&\big(\frac{a_{0}(3b_{0}^{2} - a_{0}^{2})}{2} |\xi|^{2}\xi(X)\xi(u)+\frac{a_{0}(a_{0}^{2} - b_{0}^{2})}{2} |\xi|^{2}g(u,v)\big){\rm{Tr}}(\rm{Id});\\
   (3)~~  &{\rm{Tr}}\Big(\widetilde{c}(u)  \widetilde{c}(\xi)\big( \epsilon (\xi)\iota(\xi)\bar{c}(\xi)c(X)+ \epsilon (\xi)\iota(\xi)c(X)\widetilde{c}(\xi) \big)\Big)\nonumber\\
 =&\big(\frac{a_{0}^{3}b_{0}-a_{0}b_{0}^{3}+2a_{0}b_{0}^{2} }{2}|\xi|^{2}\xi(X) \xi(u)
 - \frac{b_{0}(a_{0}^{2} - b_{0}^{2})}{4} |\xi|^{2}g(u,v)\big){\rm{Tr}}(\rm{Id});\\
   (4)~~  &{\rm{Tr}}\Big(\widetilde{c}(u) \widetilde{c}(\xi)\big( \epsilon (\xi)\iota(\xi)\bar{c}(\xi)c(X)\epsilon (\xi)\iota(\xi)+ \epsilon (\xi)\iota(\xi)c(X)\widetilde{c}(\xi)\epsilon (\xi)\iota(\xi) \big)\Big) \nonumber\\
 =&a_{0} b_{0}^{2 }(a_{0}+b_{0})|\xi|^{4}\xi(X) \xi(u) {\rm{Tr}}(\rm{Id}).
\end{align}
\end{lem}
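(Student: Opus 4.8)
The plan is to carry this out exactly as Lemma~3.8 was proved, with the triple Clifford factor $\widetilde{c}(u)\widetilde{c}(v)\widetilde{c}(w)$ replaced throughout by the single factor $\widetilde{c}(u)$. The algebraic input is the reduction already established in the proof of Lemma~3.8,
\[ \bar{c}(\xi)c(X)+c(X)\widetilde{c}(\xi)=(a_0-b_0)\big(\epsilon(X)\epsilon(\xi)-\iota(X)\iota(\xi)+\epsilon(X)\iota(\xi)-\iota(X)\epsilon(\xi)\big)-(a_0+b_0)\xi(X)=(a_0-b_0)c(X)\widehat{c}(\xi)-(a_0+b_0)\xi(X), \]
together with its three consequences obtained by multiplying it on the right, on the left, and on both sides by $\epsilon(\xi)\iota(\xi)$; these rely only on $\epsilon(\xi)^2=\iota(\xi)^2=0$, $\epsilon(\xi)\iota(\xi)+\iota(\xi)\epsilon(\xi)=|\xi|^2$, $\iota(\xi)\epsilon(\xi)\iota(\xi)=|\xi|^2\iota(\xi)$ and $\epsilon(\xi)\iota(\xi)\epsilon(\xi)\iota(\xi)=|\xi|^2\epsilon(\xi)\iota(\xi)$. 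Inserting the appropriate one of these into the bracketed operator of (1), (2), (3) and (4) respectively, the left-hand side of each becomes a short linear combination, with coefficients in $a_0,b_0,|\xi|$, of the traces
\[ {\rm Tr}\big(\widetilde{c}(u)\widetilde{c}(\xi)\big),\qquad {\rm Tr}\big(\widetilde{c}(u)\widetilde{c}(\xi)c(X)\epsilon(\xi)\big),\qquad {\rm Tr}\big(\widetilde{c}(u)\widetilde{c}(\xi)c(X)\iota(\xi)\big),\qquad {\rm Tr}\big(\widetilde{c}(u)\widetilde{c}(\xi)\epsilon(\xi)\iota(\xi)\big). \]

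Each of these I would evaluate by the same mechanism used in the proofs of Lemmas~3.7, 3.8 and~4.2: repeatedly apply the anticommutation relations of Lemma~3.6, namely $\widetilde{c}(X)\widetilde{c}(Y)+\widetilde{c}(Y)\widetilde{c}(X)=-2a_0b_0g(X,Y)$, $\widetilde{c}(X)c(Y)+c(Y)\widetilde{c}(X)=-(a_0+b_0)g(X,Y)$ and $\widetilde{c}(X)\widehat{c}(Y)+\widehat{c}(Y)\widetilde{c}(X)=(a_0-b_0)g(X,Y)$, together with the cyclicity ${\rm Tr}(AB)={\rm Tr}(BA)$, so as to bring the factors into one of the normal forms $\widetilde{c}(a)\widetilde{c}(b)$, $\widetilde{c}(a)c(X)$ or $\widetilde{c}(a)c(X)\epsilon(\xi)\iota(\xi)$; the first is closed by ${\rm Tr}(\widetilde{c}(a)\widetilde{c}(b))=-a_0b_0\,g(a,b)\,{\rm Tr}({\rm Id})$ and the last two by Lemma~4.2, while $|\xi|^2=\epsilon(\xi)\iota(\xi)+\iota(\xi)\epsilon(\xi)$ is invoked each time a bare $\epsilon(\xi)\iota(\xi)$ reappears, so that the recursion terminates. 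Collecting the resulting coefficients in $a_0$ and $b_0$ then yields the four closed-form identities. Unlike in the triple-factor Lemma~3.8, where the extra Clifford factors often make intermediate traces vanish outright, here the reductions must be carried through completely, since the intermediate traces are in general nonzero.

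The only genuine difficulty is bookkeeping. Because the principal symbol $\widetilde{c}(\xi)$ is non-scalar and $\epsilon(\xi)\iota(\xi)$ is a non-central endomorphism --- exactly the feature singled out in the Introduction as making these computations heavier than in the scalar-symbol case --- each application of an anticommutator produces several terms with delicate signs, and the main risk is dropping a term or mis-signing one. Organizing the computation around the fixed list of normal forms above, and cross-checking the intermediate identities on simple test vectors (for instance $u=X=\xi=e_1$, and $u=X=e_1$ with $\xi=e_2$), keeps this under control and pins down the coefficients. (Note also that the quantity written $g(u,v)$ in identities (2) and (3) must read $g(u,X)$, since no one-form $v$ occurs in the statement.)
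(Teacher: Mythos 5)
Your proposal is essentially the same as the paper's proof: you identify the same base reduction $\bar{c}(\xi)c(X)+c(X)\widetilde{c}(\xi)=(a_0-b_0)\big(\epsilon(X)\epsilon(\xi)-\iota(X)\iota(\xi)+\epsilon(X)\iota(\xi)-\iota(X)\epsilon(\xi)\big)-(a_0+b_0)\xi(X)$, form the three variants obtained by multiplying on the left, right, or both sides by $\epsilon(\xi)\iota(\xi)$, and reduce the resulting traces via the anticommutation relations of Lemma~3.6 and Lemma~4.2 together with cyclicity and $|\xi|^2=\epsilon(\xi)\iota(\xi)+\iota(\xi)\epsilon(\xi)$, which is exactly what the paper does (equations $(4.15)$--$(4.21)$). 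Your observation that the $g(u,v)$ appearing in identities (2) and (3) must be read as $g(u,X)$ is correct --- no one-form $v$ enters the left-hand side, so this is a typographical slip in the paper's statement.
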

\begin{proof}
For $\bar{c}(\xi)=b_0\epsilon(\xi)-a_0\iota(\xi), \widetilde{c}(\xi)= a_0\epsilon(\xi)-b_0\iota(\xi),$  we get
\begin{align}
\bar{c}(\xi)c(X)+c(X)\widetilde{c}(\xi)=(a_0-b_0)\big( \epsilon (X)\epsilon (\xi)-\iota(X)\iota(\xi)+\epsilon (X)\iota(\xi)-\iota(X) \epsilon (\xi) \big)-(a_0+b_0) \xi(X).
\end{align}

(1).  By Lemma (3.6) and the relation of the Clifford action and $ {\rm{Tr}}(AB)= {\rm{Tr}}(BA) $, then
   \begin{align}
 {\rm{Tr}}\Big(\widetilde{c}(u)  \widetilde{c}(\xi)\big( \bar{c}(\xi)c(X)+c(X)\widetilde{c}(\xi) \big)\Big)
  = a_{0} b_{0} (a_{0}+b_{0})\xi(X) \xi(u) {\rm{Tr}}(\rm{Id}).
\end{align}

(2). We note that
  \begin{align}
\big(\bar{c}(\xi)c(X)+c(X)\widetilde{c}(\xi)\big) \epsilon (\xi)\iota(\xi)=(a_0-b_0) |\xi|^{2}c(X)\iota(\xi)-(a_0+b_0)  \xi(X)\epsilon (\xi)\iota(\xi).
\end{align}
Then
   \begin{align}
&{\rm{Tr}}\Big(\widetilde{c}(u)  \widetilde{c}(\xi)\big( \bar{c}(\xi)c(X) \epsilon (\xi)\iota(\xi)+c(X)\widetilde{c}(\xi) \epsilon (\xi)\iota(\xi) \big)\Big)\nonumber\\
=& (a_0-b_0) |\xi|^{2}{\rm{Tr}}\Big(\widetilde{c}(u)  \widetilde{c}(\xi)c(X)\iota(\xi)\Big)
-(a_0+b_0)  \xi(X) {\rm{Tr}}\Big(\widetilde{c}(u)  \widetilde{c}(\xi)\epsilon (\xi)\iota(\xi)\Big)\nonumber\\
 =&\big(\frac{a_{0}(3b_{0}^{2} - a_{0}^{2})}{2} |\xi|^{2}\xi(X)\xi(u)+\frac{a_{0}(a_{0}^{2} - b_{0}^{2})}{2} |\xi|^{2}g(u,v)\big){\rm{Tr}}(\rm{Id}).
\end{align}

(3). It is obtained through direct calculation
  \begin{align}
\epsilon (\xi)\iota(\xi)\big(\bar{c}(\xi)c(X)+c(X)\widetilde{c}(\xi)\big)=(a_0-b_0)|\xi|^{2}\big(\xi(X)+  c(X)\epsilon (\xi) \big)-(a_0+b_0)  \xi(X)\epsilon (\xi)\iota(\xi).
\end{align}
Then
 \begin{align}
  &{\rm{Tr}}\Big(\widetilde{c}(u)  \widetilde{c}(\xi)\big( \epsilon (\xi)\iota(\xi)\bar{c}(\xi)c(X)+ \epsilon (\xi)\iota(\xi)c(X)\widetilde{c}(\xi) \big)\Big)\nonumber\\
  =& {\rm{Tr}}\Big(\widetilde{c}(u)  \widetilde{c}(\xi)
  \big( (a_0-b_0)|\xi|^{2}\big(\xi(X)+  c(X)\epsilon (\xi) \big)-(a_0+b_0)  \xi(X)\epsilon (\xi)\iota(\xi)
   \big)\Big)\nonumber\\
 =& (a_0-b_0)|\xi|^{2}\xi(X){\rm{Tr}}\Big(\widetilde{c}(u)  \widetilde{c}(\xi)
  \Big)
  +(a_0-b_0)|\xi|^{2}  {\rm{Tr}}\Big(\widetilde{c}(u)  \widetilde{c}(\xi)
 c(X)\epsilon (\xi) \Big)\nonumber\\
   &  -(a_0+b_0) \xi(X)  {\rm{Tr}}\Big(\widetilde{c}(u) \widetilde{c}(\xi)
\epsilon (\xi)\iota(\xi)
 \Big)\nonumber\\
 =&\big(\frac{a_{0}^{3}b_{0}-a_{0}b_{0}^{3}+2a_{0}b_{0}^{2} }{2}|\xi|^{2}\xi(X) \xi(u)
 - \frac{b_{0}(a_{0}^{2} - b_{0}^{2})}{4} |\xi|^{2}g(u,v)\big){\rm{Tr}}(\rm{Id}).
\end{align}

(4). We note that
  \begin{align}
\epsilon (\xi)\iota(\xi)\big(\bar{c}(\xi)c(X)+c(X)\widetilde{c}(\xi)\big) \epsilon (\xi)\iota(\xi)= -2b_0  |\xi|^{2}\xi(X)\epsilon (\xi)\iota(\xi).
\end{align}
 Then
 \begin{align}
 &{\rm{Tr}}\Big(\widetilde{c}(u)  \widetilde{c}(v) \widetilde{c}(w)\widetilde{c}(\xi)\big( \epsilon (\xi)\iota(\xi)\bar{c}(\xi)c(X)\epsilon (\xi)\iota(\xi)+ \epsilon (\xi)\iota(\xi)c(X)\widetilde{c}(\xi)\epsilon (\xi)\iota(\xi) \big)\Big)\nonumber\\
 = &{\rm{Tr}}\Big(\widetilde{c}(u)  \widetilde{c}(\xi)\big( -2b_0  |\xi|^{2}\xi(X)\epsilon (\xi)\iota(\xi)\big)\Big)\nonumber\\
  = &-2b_0  |\xi|^{2}\xi(X){\rm{Tr}}\Big(\widetilde{c}(u) \widetilde{c}(\xi) \epsilon (\xi)\iota(\xi) \Big)\nonumber\\
  =&a_{0} b_{0}^{2 }(a_{0}+b_{0})|\xi|^{4}\xi(X) \xi(u) {\rm{Tr}}(\rm{Id}).
\end{align}
 \end{proof}
By integrating formula we get
\begin{align}
& \int_{\|\xi\|=1} \operatorname{Tr} [L_{1}  (x_{0} )+L_{2}  (x_{0} ) ](x, \xi)  \sigma(\xi)  \nonumber\\
=&\int_{\|\xi\|=1} \operatorname{Tr} [\widetilde{c}(u)  \sqrt{-1}c(X)
\frac{b_{0}^{4}|\xi|^{2}+(a_{0}^{4}-b_{0}^{4})\varepsilon(\xi)\iota(\xi)}{a_{0}^{4}b_{0}^{4}|\xi|^{6}}  (x_{0} ) ](x, \xi)  \sigma(\xi)  \nonumber\\
&+\int_{\|\xi\|=1} \operatorname{Tr} \Big[ -\sqrt{-1}\widetilde{c}(u)  \widetilde{c}( \xi)
\sigma_{-2}(\widetilde{\Delta}^{-1})\sigma_{1}((\widetilde{D}^{*}\widetilde{D})^{-1})\sigma_{-4}(\widetilde{\Delta}^{-2})  (x_{0} ) \Big](x, \xi)  \sigma(\xi)  \nonumber\\
&+\int_{\|\xi\|=1} \operatorname{Tr} \Big[  -\sqrt{-1}\widetilde{c}(u)
\widetilde{c}(\xi)\sigma_{-4}(\widetilde{\Delta}^{-2})\sigma_{1}((\widetilde{D}^{*}\widetilde{D})^{-1})\sigma_{-2}(\widetilde{\Delta}^{-1}) (x_{0} ) \Big](x, \xi)  \sigma(\xi)  \nonumber\\
=&\frac{\sqrt{-1}(a_{0}^{2}-b_{0}^{2})(a_{0}-b_{0})
(11a_{0}^{4}+16a_{0}^{3}b_{0}+4a_{0}^{2}b_{0}^{3}+11a_{0}^{2}b_{0}^{2}+4a_{0} b_{0}^{4}+8a_{0} b_{0}^{3} -8b_{0}^{4} )}{16a_{0}^{6}b_{0}^{4}}
 g(u,X) {\rm{Tr}}(\rm{Id}).
 \end{align}
Then we obtain the spectral form associated with nonminimal de Rham-Hodge operators.
 \begin{thm}\label{mrthm}
 	Let $M$ be an four-dimensional oriented compact spin Riemannian manifold,  with the trilinear Clifford multiplication by functional of differential one-form $\widetilde{c}(u)$, the spectral torsion $\mathscr{T}_{\widetilde{D}}$ for  nonminimal de Rham-Hodge operators  $\widetilde{D}=a_0\text{d}+b_0\delta+\sqrt{-1}c(X),  \widetilde{D}^{*}=b_0\text{d}+a_0\delta+\sqrt{-1}c(X)$  equals to
\begin{align}
 	&\mathscr{T}_{1}\big(\widetilde{c}(u) \big)
= \mathrm{Wres}\big(\widetilde{c}(u)  \widetilde{D}(\widetilde{D}^{*}\widetilde{D})^{-2}\big) \nonumber\\
=& \int_{M}\bigg\{
\frac{\sqrt{-1}(a_{0}^{2}-b_{0}^{2})(a_{0}-b_{0})
(11a_{0}^{4}+16a_{0}^{3}b_{0}+4a_{0}^{2}b_{0}^{3}+11a_{0}^{2}b_{0}^{2}+4a_{0} b_{0}^{4}+8a_{0} b_{0}^{3} -8b_{0}^{4} )}{4a_{0}^{6}b_{0}^{4}}
 g(u,X) {\rm Vol}(S^{3})
\bigg\}{\rm d Vol}_M.
\end{align}
 \end{thm}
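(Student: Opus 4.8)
The plan is to run the same two-step scheme that produced the spectral torsion formula (Theorem~3.9), simplified by the fact that only a single Clifford one-form $\widetilde{c}(u)$ now appears instead of three. Since $M$ is closed, the defining formula for the noncommutative residue reduces the claim to evaluating $\int_M\int_{\|\xi\|=1}\operatorname{Tr}[\sigma_{-4}(\widetilde{c}(u)\widetilde{D}(\widetilde{D}^*\widetilde{D})^{-2})](x,\xi)\,\sigma(\xi)\,\mathrm{d}x$. \textbf{Step I (symbol representation).} Fix $x_0\in M$ and work in normal coordinates centred at $x_0$, so that $\sigma^i(x_0)=0$, $\Gamma^k(x_0)=0$, $\partial_{x_j}(\sigma_{-2}(\widetilde{\Delta}^{-1}))(x_0)=0$ and $\sigma_{-3}(\widetilde{\Delta}^{-1})(x_0)=0$. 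Applying the composition formula for the order $-4$ symbol together with Lemma~3.3, Lemma~3.4 and Lemma~3.5 (exactly as in Step~I of Section~3), the symbol at $x_0$ collapses to $L_1(x_0)+L_2(x_0)$: here $L_1$ pairs the zeroth-order part $\sqrt{-1}c(X)$ of $\widetilde{D}$ against $\sigma_{-4}(\widetilde{\Delta}^{-2})$, while $L_2$ pairs the leading part $\widetilde{c}(\xi)$ against $\sigma_{-5}((\widetilde{D}^*\widetilde{D})^{-2})$, which I then expand by Lemma~3.5 in terms of $\sigma_{-2}(\widetilde{\Delta}^{-1})$, $\sigma_{-4}(\widetilde{\Delta}^{-2})$ (given by Lemma~3.3) and $\sigma_1((\widetilde{D}^*\widetilde{D})^{-1})(x_0)=\sqrt{-1}\bar{c}(\xi)\sqrt{-1}c(X)+(\sqrt{-1})^{2}c(X)\widetilde{c}(\xi)$.

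\textbf{Step II (trace and integration).} Next I take the fibrewise trace. For $L_1$ one needs only $\operatorname{Tr}(\widetilde{c}(u)c(X))$ and $\operatorname{Tr}(\widetilde{c}(u)c(X)\varepsilon(\xi)\iota(\xi))$, which are the two identities of Lemma~4.2. For $L_2$, after inserting the explicit scalar symbols from Lemma~3.3 (on $\|\xi\|=1$ these are polynomials in $\varepsilon(\xi)\iota(\xi)$ with $|\xi|^2=1$), one is reduced precisely to the four trace expressions of Lemma~4.3, namely $\operatorname{Tr}(\widetilde{c}(u)\widetilde{c}(\xi)(\bar{c}(\xi)c(X)+c(X)\widetilde{c}(\xi)))$ with $\varepsilon(\xi)\iota(\xi)$ inserted on neither, one, or both sides; each of these is evaluated by iterating the modified Clifford anticommutation relations of Lemma~3.6 (such as $\widetilde{c}(X)c(Y)+c(Y)\widetilde{c}(X)=-(a_0+b_0)g(X,Y)$) and the cyclicity of $\operatorname{Tr}$, producing linear combinations of $g(u,X)$ and $\xi(u)\xi(X)$ times $\operatorname{Tr}(\mathrm{Id})$. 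Finally I integrate over the unit sphere $\|\xi\|=1$ in $\R^4$, using $\int_{\|\xi\|=1}\sigma(\xi)=\mathrm{Vol}(S^3)$, $\int_{\|\xi\|=1}\xi_i\xi_j\,\sigma(\xi)=\tfrac14\mathrm{Vol}(S^3)\delta_{ij}$ and the vanishing of odd moments; this turns each $\xi(u)\xi(X)$ into $\tfrac14\mathrm{Vol}(S^3)g(u,X)$ and collapses all contributions to a single scalar multiple of $g(u,X)$, and integrating over $M$ against $\mathrm{dVol}_M$ yields the asserted identity.

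The main obstacle is the bookkeeping in Step~II. Because $\widetilde{D}^*\widetilde{D}$ has a \emph{non-scalar} principal symbol --- the term $(a_0^2-b_0^2)\varepsilon(\xi)\iota(\xi)$ occurring in Lemma~3.3 --- the scalar factors do not slide freely past the Clifford factors, so one must keep each $\varepsilon(\xi)\iota(\xi)$ in its correct left/right position, track every power of $a_0$ and $b_0$ separately, and combine correctly the two orderings $\sigma_{-2}\sigma_1\sigma_{-4}$ and $\sigma_{-4}\sigma_1\sigma_{-2}$ coming from the formula for $\sigma_{-5}((\widetilde{D}^*\widetilde{D})^{-2})$ in Lemma~3.5; a single sign or ordering slip corrupts the final polynomial $11a_0^4+16a_0^3b_0+4a_0^2b_0^3+11a_0^2b_0^2+4a_0b_0^4+8a_0b_0^3-8b_0^4$. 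Everything else is routine once Lemmas~3.3--3.6, 4.2 and 4.3 are in hand.
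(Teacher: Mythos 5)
Your proposal is correct and follows the same two-step scheme the paper itself uses: decompose $\sigma_{-4}(\widetilde{c}(u)\widetilde{D}(\widetilde{D}^*\widetilde{D})^{-2})(x_0)$ into $L_1+L_2$ via Lemmas 3.3--3.5 in normal coordinates, then evaluate the fibrewise traces with Lemmas 4.2 and 4.3 and integrate over $S^3$ using the standard second-moment formula. The bookkeeping warning you flag about the non-scalar $\varepsilon(\xi)\iota(\xi)$ term and the two orderings $\sigma_{-2}\sigma_1\sigma_{-4}$ versus $\sigma_{-4}\sigma_1\sigma_{-2}$ is exactly where the paper's computation concentrates its effort.
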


\section{The spectral torsion for nonminimal de-Rham Hodge operator on four-dimensional manifold with boundary }
The purpose of this section is to specify the spectral torsion on manifold with boundary  for the nonminimal de Rham-Hodge operators
$\widetilde{D}=a_0\text{d}+b_0\delta+\sqrt{-1}c(X)$ and $\widetilde{D}^{*}=b_0\text{d}+a_0\delta+\sqrt{-1}c(X)$.
 Let $M$ be a compact oriented Riemannian manifold of even dimension $n=4$.
 An application of (2.1.4) in \cite{Wa1} shows that
\begin{align}
\widetilde{{\rm Wres}}\Big[\pi^+\Big(\widetilde{c}(u)\widetilde{c}(v)\widetilde{c}(w)\widetilde{D}(\widetilde{D}^{*}\widetilde{D})^{-1}\Big)
\circ\pi^+(\widetilde{D}^{*}\widetilde{D})^{-1}\Big]
=&\mathrm{Wres}\big(\widetilde{c}(u) \widetilde{c}(v) \widetilde{c}(w) \widetilde{D}(\widetilde{D}^{*}\widetilde{D})^{-2}\big)+\int_{\partial M}\Psi,
\end{align}
where
\begin{align}
\Psi_{1}=&\int_{|\xi'|=1}\int^{+\infty}_{-\infty}
{\rm Tr}_{\wedge^{*}(T^{*}M)}
\big[ \sigma^+_{-1}(\widetilde{c}(u) \widetilde{c}(v) \widetilde{c}(w)\widetilde{D}(\widetilde{D}^{*}\widetilde{D})^{-1})(x',0,\xi',\xi_n)\nonumber\\
&\times
\partial_{\xi_n}\sigma_{-2} (\widetilde{D}^{*}\widetilde{D})^{-1}(x',0,\xi',\xi_n)\big]\rm{d}\xi \sigma(\xi')\rm{d}x',
\end{align}
and $r-k+|\alpha|+\ell-j-1=-4,r\leq-1,\ell\leq-2$.

An easy calculation gives
  \begin{align}
 &\pi^+_{\xi_n}\big( \sigma_{-1}(\widetilde{c}(u)\widetilde{c}(v)\widetilde{c}(w)\widetilde{D}(\widetilde{D}^{*}\widetilde{D})^{-1})\big)\nonumber\\
 =&\widetilde{c}(u)\widetilde{c}(v)\widetilde{c}(w)\pi^+_{\xi_n}\big( \sigma_{-1} (\widetilde{D}(\widetilde{D}^{*}\widetilde{D})^{-1})\big)\nonumber\\
  =& \widetilde{c}(u)\widetilde{c}(v)\widetilde{c}(w)\pi^+_{\xi_n}\big( \sigma_{-1} ( (\widetilde{D}^{*} )^{-1})\big)\nonumber\\
=&\widetilde{c}(u)\widetilde{c}(v)\widetilde{c}(w)
\Big(\frac{ \epsilon (\xi')+\sqrt{-1} \epsilon (dx_{n} )}{2a_{0}(\xi_{n}-i)} -\frac{\iota (\xi')+\sqrt{-1} \iota(dx_{n})}{2b_{0}(\xi_{n}-i)}\Big),
 \end{align}
where some basic facts and formulae about Boutet de Monvel's calculus which can be found  in Sec.2 in \cite{Wa1}.
By (2.16), we get
  \begin{align}
&\partial_{\xi_n}\big(\sigma_{-2}((\widetilde{D}^{*}\widetilde{D})^{-1})\big)   (x_0)|_{|\xi'|=1}\nonumber\\
=&\partial_{\xi_n}\big(\frac{b_{0}^{2}|\xi|^{2}+(a_{0}^{2}-b_{0}^{2})\varepsilon(\xi)\iota(\xi)}{a_{0}^{2}b_{0}^{2}|\xi|^{4}}\big) (x_0)|_{|\xi'|=1}\nonumber\\
=& \frac{1}{a_{0}^{2}}\frac{-2\xi_n}{ (1+\xi_n^2)^{2}}
+ \frac{a_{0}^{2}-b_{0}^{2}}{a_{0}^{2}b_{0}^{2}}
\Big(\frac{-4\xi_n}{ (1+\xi_n^2)^{3}}\epsilon (dx_{n})\iota(dx_{n})+\frac{-2\xi_n^{2}+2\xi_n}{ (1+\xi_n^2)^{3}} \epsilon (\xi')\iota (\xi')\nonumber\\
&+\frac{-3\xi_n^{2}+1}{ (1+\xi_n^2)^{3}} \epsilon (\xi')\iota(dx_{n})+\frac{-3\xi_n^{2}+1}{ (1+\xi_n^2)^{3}}\epsilon (dx_{n} )\iota (\xi')  \Big).
\end{align}
By Lemma (3.6) and the relation of the Clifford action and $ {\rm{Tr}}(AB)= {\rm{Tr}}(BA) $, we have the equality:
 \begin{align}
 &{\rm{Tr}}\big(\widetilde{c}(u) \widetilde{c}(v) \widetilde{c}(w)\epsilon (\xi')\epsilon (dx_{n} )\iota (\xi') \big)\nonumber\\
 =& - {\rm{Tr}}\big(\widetilde{c}(u) \widetilde{c}(v) \widetilde{c}(w)\epsilon (\xi')\iota (\xi')\epsilon (dx_{n} ) \big) \nonumber\\
  =& - {\rm{Tr}}\big(\widetilde{c}(u) \widetilde{c}(v) \widetilde{c}(w)
  \big(- \iota (\xi') \epsilon (\xi') +  |\xi'|^{2 }     \big)\epsilon (dx_{n} ) \big) \nonumber\\
    =& {\rm{Tr}}\big(\widetilde{c}(u) \widetilde{c}(v) \widetilde{c}(w)
\iota (\xi') \epsilon (\xi')\epsilon (dx_{n} ) \big)
   - |\xi'|^{2 } {\rm{Tr}}\big(\widetilde{c}(u) \widetilde{c}(v) \widetilde{c}(w)\epsilon (dx_{n} ) \big) \nonumber\\
  =&\cdots \nonumber\\
  =&-{\rm{Tr}}\big(\widetilde{c}(u) \widetilde{c}(v) \widetilde{c}(w)\epsilon (\xi')\epsilon (dx_{n} )\iota (\xi') \big)- |\xi'|^{2 } {\rm{Tr}}\big(\widetilde{c}(u) \widetilde{c}(v) \widetilde{c}(w)\epsilon (dx_{n} ) \big)\nonumber\\
  &+a_{0}\xi'(u){\rm{Tr}}\big(\widetilde{c}(v)  \widetilde{c}(w)\epsilon (\xi')\epsilon (dx_{n} ) \big)
  -a_{0}\xi'(v){\rm{Tr}}\big(\widetilde{c}(u)  \widetilde{c}(w)\epsilon (\xi')\epsilon (dx_{n} ) \big)\nonumber\\
    &+a_{0}\xi'(w){\rm{Tr}}\big(\widetilde{c}(u)  \widetilde{c}(v)\epsilon (\xi')\epsilon (dx_{n} ) \big)
\end{align}
By transferring the first term on the right side of (5.5) to the left, we obtain
 \begin{align}
 &{\rm{Tr}}\big(\widetilde{c}(u) \widetilde{c}(v) \widetilde{c}(w)\epsilon (\xi')\epsilon (dx_{n} )\iota (\xi') \big)\nonumber\\
  =&  \frac{1}{2}\Big(
  - |\xi'|^{2 } {\rm{Tr}}\big(\widetilde{c}(u) \widetilde{c}(v) \widetilde{c}(w)\epsilon (dx_{n} ) \big)
+a_{0}\xi'(u){\rm{Tr}}\big(\widetilde{c}(v)  \widetilde{c}(w)\epsilon (\xi')\epsilon (dx_{n} ) \big)\nonumber\\
  &-a_{0}\xi'(v){\rm{Tr}}\big(\widetilde{c}(u)  \widetilde{c}(w)\epsilon (\xi')\epsilon (dx_{n} ) \big)
     +a_{0}\xi'(w){\rm{Tr}}\big(\widetilde{c}(u)  \widetilde{c}(v)\epsilon (\xi')\epsilon (dx_{n} ) \big)\Big)\nonumber\\
 =& -\frac{a_{0}b_{0}^{2}}{4}  |\xi'|^{2 }\big( u_{n}g(v,w)-v_{n}g(u,w)+w_{n}g(u,v) \big).
\end{align}
In the same way we have
 \begin{align}
 &{\rm{Tr}}\big(\widetilde{c}(u) \widetilde{c}(v) \widetilde{c}(w)\iota (\xi')\epsilon (\xi')\iota (dx_{n} ) \big)\nonumber\\
  =&  \frac{1}{2}\Big(
a_{0}u_{n}{\rm{Tr}}\big(  \widetilde{c}(v) \widetilde{c}(w)\iota (\xi')\epsilon (\xi') \big)
   -a_{0}v_{n}{\rm{Tr}}\big(  \widetilde{c}(u) \widetilde{c}(w)\iota (\xi')\epsilon (\xi') \big)
  +a_{0}w_{n}{\rm{Tr}}\big(  \widetilde{c}(u) \widetilde{c}(v)\iota (\xi')\epsilon (\xi') \big)\Big)\nonumber\\
 =& -\frac{a_{0}^{2}b_{0}}{4}  |\xi'|^{2 }\big( u_{n}g(v,w)-v_{n}g(u,w)+w_{n}g(u,v) \big).
\end{align}
Similarly, we obtain
\begin{align}
 &{\rm{Tr}}\big(\widetilde{c}(u) \widetilde{c}(v)\widetilde{c}(w) \epsilon (dx_{n}) \big)
 =\frac{a_{0} b_{0}^{2}}{2}\big( u_{n}g(v,w)-v_{n}g(u,w)+w_{n}g(u,v) \big){\rm{Tr}}(\rm{Id});\nonumber\\
&{\rm{Tr}}\big(\widetilde{c}(u) \widetilde{c}(v)\widetilde{c}(w) \iota  (dx_{n}) \big)
 =-\frac{a_{0}^{2} b_{0}}{2}\big( u_{n}g(v,w)-v_{n}g(u,w)+w_{n}g(u,v) \big){\rm{Tr}}(\rm{Id});\nonumber\\
  &{\rm{Tr}}\big(\widetilde{c}(u) \widetilde{c}(v)\widetilde{c}(w) \iota  (dx_{n})  \epsilon (dx_{n}) \iota  (dx_{n}) \big)
 =-\frac{a_{0}^{2} b_{0}}{2}|dx_{n}|^{2}\big( u_{n}g(v,w)-v_{n}g(u,w)+w_{n}g(u,v) \big){\rm{Tr}}(\rm{Id});\nonumber\\
  &{\rm{Tr}}\big(\widetilde{c}(u) \widetilde{c}(v)\widetilde{c}(w) \iota  (dx_{n})  \epsilon (\xi')\iota(\xi') \big)
 =-\frac{a_{0}^{2} b_{0}}{4}|\xi'|^{2}\big( u_{n}g(v,w)-v_{n}g(u,w)+w_{n}g(u,v) \big){\rm{Tr}}(\rm{Id}).
\end{align}
From (5.3) and (5.4) we get
\begin{align}
&{\rm Tr}_{\wedge^{*}(T^{*}M)}\big[ \sigma^+_{-1}(c(u)c(v)c(w)\widetilde{D}(\widetilde{D}^{*}\widetilde{D})^{-1})(x',0,\xi',\xi_n)
 \partial_{\xi_n}\sigma_{-2} (\widetilde{D}^{*}\widetilde{D})^{-1}(x',0,\xi',\xi_n)\big]\nonumber\\
=&  \frac{a_{0}^{2}-b_{0}^{2}}{2a_{0}^{3}b_{0}^{2}}\frac{1-3\xi_n^{2}}{ (\xi_n-i)(1+\xi_n^2)^{3}}{\rm{Tr}}\big(\widetilde{c}(u) \widetilde{c}(v) \widetilde{c}(w)\epsilon (\xi')\epsilon (dx_{n} )\iota (\xi') \big)\nonumber\\
&- \frac{a_{0}^{2}-b_{0}^{2}}{2a_{0}^{2}b_{0}^{3}}\frac{1-3\xi_n^{2}}{ (\xi_n-i)(1+\xi_n^2)^{3}}{\rm{Tr}}\big(\widetilde{c}(u) \widetilde{c}(v) \widetilde{c}(w)\iota (\xi')\epsilon (\xi')\epsilon (dx_{n} ) \big)\nonumber\\
&- \frac{1}{a_{0}^{3} }\frac{i\xi_n}{ (\xi_n-i)(1+\xi_n^2)^{2}}{\rm{Tr}}\big(\widetilde{c}(u) \widetilde{c}(v) \widetilde{c}(w) \epsilon (dx_{n} ) \big)\nonumber\\
&+\frac{a_{0}^{2}-b_{0}^{2}}{a_{0}^{3}b_{0}^{3}}\frac{-i\xi_n^{2}+i\xi_n}{ (\xi_n-i)(1+\xi_n^2)^{3}}{\rm{Tr}}\big(\widetilde{c}(u) \widetilde{c}(v) \widetilde{c}(w)\epsilon (dx_{n} )\epsilon (\xi') \iota (\xi')\big)\nonumber\\
&+ \frac{1}{a_{0}^{2}b_{0}  }\frac{i\xi_n}{ (\xi_n-i)(1+\xi_n^2)^{2}}{\rm{Tr}}\big(\widetilde{c}(u) \widetilde{c}(v) \widetilde{c}(w)  \iota(dx_{n} ) \big)\nonumber\\
&+ \frac{a_{0}^{2}-b_{0}^{2}}{a_{0}^{2}b_{0}^{3}}\frac{2i\xi_n}{ (\xi_n-i)(1+\xi_n^2)^{3}}{\rm{Tr}}\big(\widetilde{c}(u) \widetilde{c}(v) \widetilde{c}(w)\iota(dx_{n})\epsilon(dx_{n})\iota(dx_{n})\big)\nonumber\\
&+ \frac{a_{0}^{2}-b_{0}^{2}}{a_{0}^{2}b_{0}^{3}}\frac{i\xi_n^{2}-i\xi_n}{ (\xi_n-i)(1+\xi_n^2)^{3}}{\rm{Tr}}\big(\widetilde{c}(u) \widetilde{c}(v) \widetilde{c}(w)\iota(dx_{n})\epsilon(\xi')\iota(\xi')\big).
\end{align}
Substituting (5.9) into (5.2) we get
\begin{align}
 \Psi=&\int_{|\xi'|=1}\int^{+\infty}_{-\infty}
{\rm Tr}_{\wedge^{*}(T^{*}M)}
\big[ \sigma^+_{-1}(\widetilde{c}(u) \widetilde{c}(v) \widetilde{c}(w)\widetilde{D}(\widetilde{D}^{*}\widetilde{D})^{-1})(x',0,\xi',\xi_n)\nonumber\\
&\times
\partial_{\xi_n}\sigma_{-2} (\widetilde{D}^{*}\widetilde{D})^{-1}(x',0,\xi',\xi_n)\big]\rm{d}\xi \sigma(\xi')\rm{d}x'\nonumber\\
  =& \frac{a_{0}^{4}-b_{0}^{4}-\sqrt{-1}(2a_{0}^{4}+2a_{0}^{2}b_{0}^{2}-4a_{0}^{4} )}{64a_{0}^{2}b_{0}^{2}}
 \pi(u_{n}g(v,w)-v_{n}g(u,w)+w_{n}g(u,v)){\rm Tr}_{\wedge^{*}(T^{*}M)}[ {\rm Id}] {\rm{vol}}(S^{2})
{\rm d}\rm{vol}_{\partial_{M}}.
\end{align}

Summing up (5.10) and  Theorem 3.9 leads to the spectral torsion $\mathscr{\widetilde{T}}_{\widetilde{D}}$ for manifold with boundary  as follows.
\begin{thm}
With the trilinear Clifford multiplication by functional of differential one-forms
$\widetilde{c}(u), \widetilde{c}(v), \widetilde{c}(w)$, the spectral torsion for  for nonminimal de-Rham Hodge operator on four-dimensional manifold with boundary  equals to
\begin{align}
&\mathscr{\widetilde{T}}_{\widetilde{D}}(\widetilde{c}(u), \widetilde{c}(v), \widetilde{c}(w))\nonumber\\
=& \int_{M}\bigg\{
\frac{12\sqrt{-1}(a_{0}^{4}-b_{0}^{4})(2b_{0}^{2}-3a_{0}^{2}-a_{0}b_{0})}{16a_{0}^{4}b_{0}^{3}}
\Big[g(u,X)g(v,w)-g(v,X)g(u,w)+g(w,X)g(u,v)\Big] {\rm Vol}(S^{3})
\bigg\}{\rm d Vol}_M  \nonumber\\
&+\int_{\partial_{M}}\frac{a_{0}^{4}-b_{0}^{4}-\sqrt{-1}(2a_{0}^{4}+2a_{0}^{2}b_{0}^{2}-4a_{0}^{4} )}{16a_{0}^{2}b_{0}^{2}}
 \pi(u_{n}g(v,w)-v_{n}g(u,w)+w_{n}g(u,v)) {\rm Vol}(S^{2})
{\rm d}\rm{vol}_{\partial_{M}}.
\end{align}
\end{thm}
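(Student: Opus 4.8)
The plan is to assemble the boundary spectral torsion by combining the interior contribution, already computed in Theorem \ref{mrthm} (the four-dimensional closed-manifold case), with the boundary term $\int_{\partial M}\Psi$ arising from the noncommutative residue on Boutet de Monvel's algebra. The starting point is the identity
\[
\widetilde{{\rm Wres}}\Big[\pi^+\Big(\widetilde{c}(u)\widetilde{c}(v)\widetilde{c}(w)\widetilde{D}(\widetilde{D}^{*}\widetilde{D})^{-1}\Big)\circ\pi^+(\widetilde{D}^{*}\widetilde{D})^{-1}\Big]
=\mathrm{Wres}\big(\widetilde{c}(u)\widetilde{c}(v)\widetilde{c}(w)\widetilde{D}(\widetilde{D}^{*}\widetilde{D})^{-2}\big)+\int_{\partial M}\Psi,
\]
which follows from the composition formula in Boutet de Monvel's calculus (equation (2.1.4) in \cite{Wa1}) together with the factorization $\widetilde{D}(\widetilde{D}^{*}\widetilde{D})^{-2}=\big(\widetilde{D}(\widetilde{D}^{*}\widetilde{D})^{-1}\big)\circ(\widetilde{D}^{*}\widetilde{D})^{-1}$. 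The interior piece equals exactly the integrand of Theorem \ref{mrthm}, so the whole problem reduces to evaluating $\int_{\partial M}\Psi$.

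To compute $\Psi$, I would first identify which terms in the general boundary formula $(2.10)$ survive. By the degree-counting constraint $r-k+|\alpha|+\ell-j-1=-4$ with $r\le-1$, $\ell\le-2$, only the leading case $j=k=|\alpha|=0$, $r=-1$, $\ell=-2$ contributes, which is precisely $\Psi_{1}$. Next I would compute the two symbol ingredients near the boundary using the metric form $g^{M}=h(x_n)^{-1}g^{\partial M}+dx_n^2$ with $h(0)=1$: the ``plus'' symbol $\pi^+_{\xi_n}\sigma_{-1}\big(\widetilde{c}(u)\widetilde{c}(v)\widetilde{c}(w)\widetilde{D}(\widetilde{D}^{*}\widetilde{D})^{-1}\big)$, which by $\widetilde{D}(\widetilde{D}^{*}\widetilde{D})^{-1}=(\widetilde{D}^{*})^{-1}$ and a residue computation at $\xi_n=i$ gives the explicit expression in $(5.3)$, and the $\xi_n$-derivative $\partial_{\xi_n}\sigma_{-2}\big((\widetilde{D}^{*}\widetilde{D})^{-1}\big)$, obtained by differentiating the formula from Lemma \ref{le:32} restricted to $|\xi'|=1$, giving $(5.4)$. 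Then one multiplies these, takes the fiberwise trace using the Clifford relations of Lemma (3.6) and the cyclicity of ${\rm Tr}$ — the trace identities $(5.6)$, $(5.7)$, $(5.8)$ handle the mixed exterior/interior terms — and finally integrates over $\xi_n\in\mathbb{R}$ by residues (all poles sit at $\xi_n=i$) and over the cosphere $|\xi'|=1$ of $\partial M$, which produces the ${\rm vol}(S^{2})$ factor and the antisymmetric combination $u_ng(v,w)-v_ng(u,w)+w_ng(u,v)$.

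The main obstacle is the bookkeeping in the trace computation of $(5.9)$: the product of the two symbols expands into seven distinct terms, each a product of exterior/interior multiplications by $\xi'$ and $dx_n$ mixed with the $\widetilde{c}$'s, and each must be reduced to scalars via repeated application of the anticommutation relations in Lemma (3.6). Keeping the $a_0,b_0$ powers straight and correctly tracking the rational functions of $\xi_n$ (so that the contour integral $\int_{-\infty}^{+\infty}$ is evaluated correctly against the simple and higher-order poles at $\xi_n=i$) is where errors are most likely. Once $(5.9)$ is in hand, the $\xi_n$-integration is a routine residue calculation yielding $(5.10)$, and adding this to the interior term of Theorem \ref{mrthm} gives the stated formula, with the $\int_M$ part unchanged and the new $\int_{\partial M}$ contribution $\frac{a_{0}^{4}-b_{0}^{4}-\sqrt{-1}(2a_{0}^{4}+2a_{0}^{2}b_{0}^{2}-4a_{0}^{4})}{16a_{0}^{2}b_{0}^{2}}\,\pi\,(u_ng(v,w)-v_ng(u,w)+w_ng(u,v))\,{\rm Vol}(S^{2})$ integrated against ${\rm d}{\rm vol}_{\partial M}$.
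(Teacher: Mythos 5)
Your proposal follows essentially the same route as the paper: split the boundary noncommutative residue via (5.1) into the interior piece (already given by Theorem 3.9) and the boundary integral $\int_{\partial M}\Psi$, evaluate the two symbol ingredients $\pi^+_{\xi_n}\sigma_{-1}\big(\widetilde c(u)\widetilde c(v)\widetilde c(w)\widetilde D(\widetilde D^*\widetilde D)^{-1}\big)$ and $\partial_{\xi_n}\sigma_{-2}\big((\widetilde D^*\widetilde D)^{-1}\big)$ in boundary normal coordinates, reduce the product trace by the anticommutation relations of Lemma 3.6, and finish by residues at $\xi_n=i$ and integration over $|\xi'|=1$.

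One small imprecision: you assert that the degree-counting constraint $r-k+|\alpha|+\ell-j-1=-4$ with $r\le-1$, $\ell\le-2$ by itself forces $j=k=|\alpha|=0$, $r=-1$, $\ell=-2$. That is not true purely by counting — e.g. $r=-1$, $\ell=-2$, $j=1$, $k=0$, $|\alpha|=1$ also satisfies the constraint, and so do lower-order choices of $r$ or $\ell$ with compensating $|\alpha|$. What actually kills the remaining terms is the choice of normal coordinates at $x_0$ (so that $\Gamma^k(x_0)=0$, $\partial_{x_j}g^{ij}(x_0)=0$, $\sigma^i(x_0)=0$) together with the product structure of the metric near $\partial M$, which makes the $x'$-derivatives of the symbols and the lower-order symbols vanish at the boundary point. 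The paper itself elides this point (it defines $\Psi_1$ and then silently identifies it with $\Psi$), but if you want a complete argument you should say explicitly that the extra contributions vanish for these geometric reasons, not merely by the order bookkeeping.
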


\section*{ Acknowledgements}
~The work of the  first  author was supported by NSFC.11501414.
The work of the second author was supported by NSFC. 11771070. The authors also thank the referee
for his (or her) careful reading and helpful comments.

\end{document}